\newcommand{\comment}[1]{}
\newcommand{\onestepApp}{\Rightarrow_{\approx}}
\newcommand{\onestepFresh}{\Rightarrow_{\#}}
\newcommand{\fpe}{{\mbox{\scriptsize fp}_\approx}}
\newcommand{\nfpe}{{\mbox{\scriptsize nfp}_\approx}}
\newcommand{\barr}[1]{\ensuremath{\overline{#1}}}
\newcommand{\epc}{{\tt epc}}
\newcommand{\Dta}{\SI{}{\Delta}}
\newdimen\proofrulebreadth \proofrulebreadth=.05em
\newdimen\proofdotseparation \proofdotseparation=1.25ex
\newdimen\proofrulebaseline \proofrulebaseline=2ex
\newdimen\proofrulebaseline \proofrulebaseline=1.7ex
\let\then\relax
\def\hfi{\hskip0pt plus.0001fil}
\mathchardef\squigto="3A3B
\newif\ifinsideprooftree\insideprooftreefalse
\newif\ifonleftofproofrule\onleftofproofrulefalse
\newif\ifproofdots\proofdotsfalse
\newif\ifdoubleproof\doubleprooffalse
\let\wereinproofbit\relax
\newdimen\shortenproofleft
\newdimen\shortenproofright
\newdimen\proofbelowshift
\newbox\proofabove
\newbox\proofbelow
\newbox\proofrulename
\def\shiftproofbelow{\let\next\relax\afterassignment\setshiftproofbelow\dimen0 }
\def\shiftproofbelowneg{\def\next{\multiply\dimen0 by-1 }%
\afterassignment\setshiftproofbelow\dimen0 }
\def\setshiftproofbelow{\next\proofbelowshift=\dimen0 }
\def\setproofrulebreadth{\proofrulebreadth}
\def\prooftree{
%
\ifnum  \lastpenalty=1
\then   \unpenalty
\else   \onleftofproofrulefalse
\fi
%

\ifonleftofproofrule
\else   \ifinsideprooftree
        \then   \hskip.5em plus1fil
        \fi
\fi
%
\bgroup
\setbox\proofbelow=\hbox{}\setbox\proofrulename=\hbox{}%
\let\justifies\proofover\let\leadsto\proofoverdots\let\Justifies\proofoverdbl
\let\using\proofusing\let\[\prooftree
\ifinsideprooftree\let\]\endprooftree\fi
\proofdotsfalse\doubleprooffalse
\let\thickness\setproofrulebreadth
\let\shiftright\shiftproofbelow \let\shift\shiftproofbelow
\let\shiftleft\shiftproofbelowneg
\let\ifwasinsideprooftree\ifinsideprooftree
\insideprooftreetrue
%
\setbox\proofabove=\hbox\bgroup$\displaystyle 
\let\wereinproofbit\prooftree
%
\shortenproofleft=0pt \shortenproofright=0pt \proofbelowshift=0pt
%
\onleftofproofruletrue\penalty1
}
\def\eproofbit{
%
\ifx    \wereinproofbit\prooftree
\then   \ifcase \lastpenalty
        \then   \shortenproofright=0pt  
        \or     \unpenalty\hfil         
        \or     \unpenalty\unskip       
        \else   \shortenproofright=0pt  
        \fi
\fi
%
\global\dimen0=\shortenproofleft
\global\dimen1=\shortenproofright
\global\dimen2=\proofrulebreadth
\global\dimen3=\proofbelowshift
\global\dimen4=\proofdotseparation
\global\count255=\proofdotnumber
%
$\egroup  
%
\shortenproofleft=\dimen0
\shortenproofright=\dimen1
\proofrulebreadth=\dimen2
\proofbelowshift=\dimen3
\proofdotseparation=\dimen4
\proofdotnumber=\count255
}
\def\proofover{
\eproofbit 
\setbox\proofbelow=\hbox\bgroup 
\let\wereinproofbit\proofover
$\displaystyle
}%
\def\proofoverdbl{
\eproofbit 
\doubleprooftrue
\setbox\proofbelow=\hbox\bgroup 
\let\wereinproofbit\proofoverdbl
$\displaystyle
}%
\def\proofoverdots{
\eproofbit 
\proofdotstrue
\setbox\proofbelow=\hbox\bgroup 
\let\wereinproofbit\proofoverdots
$\displaystyle
}%
\def\proofusing{
\eproofbit 
\setbox\proofrulename=\hbox\bgroup 
\let\wereinproofbit\proofusing
\kern0.3em$
}
\def\endprooftree{
\eproofbit 
  \dimen5 =0pt
%
\dimen0=\wd\proofabove \advance\dimen0-\shortenproofleft
\advance\dimen0-\shortenproofright
%
\dimen1=.5\dimen0 \advance\dimen1-.5\wd\proofbelow
\dimen4=\dimen1
\advance\dimen1\proofbelowshift \advance\dimen4-\proofbelowshift
%
\ifdim  \dimen1<0pt
\then   \advance\shortenproofleft\dimen1
        \advance\dimen0-\dimen1
        \dimen1=0pt
        \ifdim  \shortenproofleft<0pt
        \then   \setbox\proofabove=\hbox{%
                        \kern-\shortenproofleft\unhbox\proofabove}%
                \shortenproofleft=0pt
        \fi
\fi
%
\ifdim  \dimen4<0pt
\then   \advance\shortenproofright\dimen4
        \advance\dimen0-\dimen4
        \dimen4=0pt
\fi
%
\ifdim  \shortenproofright<\wd\proofrulename
\then   \shortenproofright=\wd\proofrulename
\fi
%
\dimen2=\shortenproofleft \advance\dimen2 by\dimen1
\dimen3=\shortenproofright\advance\dimen3 by\dimen4
%
\ifproofdots
\then
        \dimen6=\shortenproofleft \advance\dimen6 .5\dimen0
        \setbox1=\vbox to\proofdotseparation{\vss\hbox{$\cdot$}\vss}%
        \setbox0=\hbox{%
                \advance\dimen6-.5\wd1
                \kern\dimen6
                $\vcenter to\proofdotnumber\proofdotseparation
                        {\leaders\box1\vfill}$%
                \unhbox\proofrulename}%
\else   \dimen6=\fontdimen22\the\textfont2 
        \dimen7=\dimen6
        \advance\dimen6by.5\proofrulebreadth
        \advance\dimen7by-.5\proofrulebreadth
        \setbox0=\hbox{%
                \kern\shortenproofleft
                \ifdoubleproof
                \then   \hbox to\dimen0{%
                        $\mathsurround0pt\mathord=\mkern-6mu%
                        \cleaders\hbox{$\mkern-2mu=\mkern-2mu$}\hfill
                        \mkern-6mu\mathord=$}%
                \else   \vrule height\dimen6 depth-\dimen7 width\dimen0
                \fi
                \unhbox\proofrulename}%
        \ht0=\dimen6 \dp0=-\dimen7
\fi
%
\let\doll\relax
\ifwasinsideprooftree
\then   \let\VBOX\vbox
\else   \ifmmode\else$\let\doll=$\fi
        \let\VBOX\vcenter
\fi
\VBOX   {\baselineskip\proofrulebaseline \lineskip.2ex
        \expandafter\lineskiplimit\ifproofdots0ex\else-0.6ex\fi
        \hbox   spread\dimen5   {\hfi\unhbox\proofabove\hfi}%
        \hbox{\box0}%
        \hbox   {\kern\dimen2 \box\proofbelow}}\doll%
%
\global\dimen2=\dimen2
\global\dimen3=\dimen3
\egroup 
\ifonleftofproofrule
\then   \shortenproofleft=\dimen2
\fi
\shortenproofright=\dimen3
%
\onleftofproofrulefalse
\ifinsideprooftree
\then   \hskip.5em plus 1fil \penalty2
\fi
}
\newcommand{\appAC}{\approx_{\{\alpha, C\}}}
\newcommand{\rulefont}[1]{\ensuremath{\mathbf{(#1)}}}
\def\moverlay{\mathpalette\mov@rlay}
\def\mov@rlay#1#2{\leavevmode\vtop{%
   \baselineskip\z@skip \lineskiplimit-\maxdimen
   \ialign{\hfil$\m@th#1##$\hfil\cr#2\crcr}}}
\newcommand{\charfusion}[3][\mathord]{
    #1{\ifx#1\mathop\vphantom{#2}\fi
        \mathpalette\mov@rlay{#2\cr#3}
      }
    \ifx#1\mathop\expandafter\displaylimits\fi}
\newcommand{\pair}[2]{\ensuremath{\langle #1, #2\rangle }}
\newcommand{\triple}[3]{\ensuremath\langle #1, #2, #3 \rangle}
\title{Nominal C-Unification}
\titlerunning{Nominal C-Unification}
\author{Mauricio Ayala-Rinc\'on\inst{1}\thanks{Work supported by
    the Brazilian agencies FAPDF (DE 193.001.369/2016), CAPES (Proc. 88881.132034/2016-01, 2nd author) and CNPq (PQ
    307009/2013, 1st author).}, Washington de Carvalho-Segundo\inst{1},
Maribel Fern\'andez\inst{2} \and Daniele Nantes-Sobrinho\inst{1}}
\authorrunning{M. Ayala-Rinc\'on, W. de Carvalho, M. Fern\'andez \and
  D. Nantes-Sobrinho}
\institute{Depts. de Matem\'atica e Ci\^encia da Computa\c{c}\~ao, 
  Universidade de Bras\'ilia, Brazil
\email{ayala@unb.br, wtonribeiro@gmail.com,
 maribel.fernandez@kcl.ac.uk, d.n.sobrinho@mat.unb.br}
\and
  Department of Informatics,
  King's College London, UK}
\begin{document}
\maketitle

\begin{abstract}
Nominal unification is an extension of first-order unification that
takes into account the $\alpha$-equivalence relation generated by
binding operators, following the nominal approach. 
We propose a sound and complete procedure for nominal unification with
commutative operators,  or nominal C-unification for short, which has
been formalised in Coq.  The procedure transforms nominal C-unification
problems  into simpler (finite families) of \emph{fixpoint} problems, whose solutions can
be generated by algebraic techniques on combinatorics of permutations.  
\end{abstract}

\vspace{-3mm}
\section{Introduction}
\vspace{-3mm}

Unification, where the goal is to solve equations between first-order
terms, is a key notion in logic programming systems, type inference
algorithms, protocol analysis tools, theorem provers, etc.  Solutions
to unification problems are represented by substitutions that map
variables ($X, Y, \dots$) to terms.

When terms include binding operators, a more general notion of
unification is needed: unification modulo $\alpha$-equivalence.  In
this paper, we follow the nominal approach to the specification of
binding operators~\cite{Gabbay2002a,Urban2004,pitts2013nominal}, where
the syntax of terms includes, in addition to variables, also
\emph{atoms} ($a, b, \dots)$, which can be abstracted, and
$\alpha$-equivalence is axiomatised by means of a \emph{freshness
  relation} $a\#t$ and \emph{name-swappings} $(a\,b)$. For example,
the first-order logic formula $\forall a. a \geq 0$ can be written as
a nominal term $\forall([a]geq(a,0))$, using function symbols
$\forall$ and $geq$ and an abstracted atom $a$.  Nominal
unification~\cite{Urban2004} is the problem of solving equations
between nominal terms modulo $\alpha$-equivalence; it is a decidable
problem and efficient nominal unification algorithms are
available~\cite{Calves2011,Calves2010a,Levy2010}, that compute
solutions consisting of \emph{freshness contexts} (containing
freshness constraints of the form $a\# X$) and substitutions.

In many applications, operators obey equational axioms. Nominal
reasoning and unification have been extended to deal with equational
theories presented by rewrite rules (see, e.g.,
\cite{Fernandez2004,Fernandez2010,ARFNantes2015}) or defined by
equational axioms (see, e.g., \cite{Clouston2007,GabbayM09}). The case
of associative and commutative nominal theories was considered
in~\cite{Ayala-Rincon2016}, where a parametric
$\{\alpha, AC\}$-equivalence relation was formalised in Coq. However,
only equational deduction was considered (not unification).  In this
paper, we study nominal C-unification.

\noindent
\textbf{Contributions:} We present a nominal C-unification algorithm,
based on a set of \emph{simplification rules}, which transforms a
given \emph{nominal C-unification problem} $\langle \Dta, Q\rangle$,
where $\Dta$ is a freshness context and $Q$ a set of freshness
constraints and equations, respectively of the form $a \#_? s$ and
$s \approx_? t$, into a finite set of triples of the form
$\langle \nabla,\sigma, P\rangle$, consisting of a freshness context
$\nabla$, a substitution $\sigma$ and a set of fixpoint equations $P$,
of the form $\pi.X \approx_?  X$. The simplifications are based on the
deduction rules for freshness and $\alpha$-C-equivalence (denoted as
$\appAC$).

The role of fixpoint equations in nominal C-unification is
  tricky: while in standard nominal unification~\cite{Urban2004},
  solving a fixpoint equation of the form $(a \ b).X \approx_? X$
  reduces to checking whether the constraints $a\# X, b \# X$ ($a$ and
  $b$ fresh in $X$) are satisfied, and in this case the solution is
  the {\em identity} substitution, in nominal C-unification, for $*$
  and $+$ commutative operators, one can have additional combinatory
  solutions of the form
  $\{X/ a + b \}, \{X/(a + b)* \ldots * (a + b)\}$, $\{X/f(a)+f(b)\}$,
  etc.
  We show that in general there is no finitary representation of
  solutions using only freshness contexts and substitutions, hence a
  nominal C-unification problem may have a potentially infinite set of
  independent most general unifiers (unlike standard C-unification,
  which is well-known to be finitary).

  We adapt the proof of NP-completeness of syntactic C-unification to show that nominal
  C-unification is NP-complete as well.  

 The simplification rules were formalised to be sound and
  complete in Coq. The formalisation and a full version of the present
  paper with all details of the proofs are available at
  \href{http://mat.unb.br/~ayala/publications.html}{\tt \color{blue}http://ayala.mat.unb.br/publications.html.} 

\noindent\textbf{Related work}: 
To  generate the set of combinatorial solutions for fixpoint
  equations  we can use an enumeration procedure given
  in~\cite{Ayala2017}, which is based on the combinatorics of
  permutations.  By combining the simplification and enumeration
  methods, we obtain a nominal C-unification procedure in two phases:
  a \emph{simplification phase}, described in this paper, which outputs a finite set of most
  general solutions that may include fixpoint constraints, and a
  \emph{generation phase}, which eliminates the fixpoint constraints
  according to~\cite{Ayala2017}. 

Several extensions of the nominal unification
algorithm have been defined, in addition to the equational extensions
already mentioned.

An algorithm for nominal unification of higher-order expressions with
recursive {\em let} was proposed in~\cite{Kutsia2016}; as for nominal
C-unification, fixpoint equations are obtained in the
process. Using the techniques in \cite{Ayala2017}, it is
  possible to proceed further  and generate the 
  combinatorial solutions of fixpoint equations.

Recently, Aoto and Kikuchi~\cite{Aoto2016a} proposed a rule-based
procedure for nominal equivariant unication~\cite{Cheney2010}, an
extension of nominal unification that is useful in confluence analysis
of nominal rewriting systems~\cite{Aoto2016,Fernandez2007}.

Furthermore, several formalisations and implementations of the nominal
unification algorithm are available.  For example, formalisations of
its soundness and completeness were developed by Urban et
al~\cite{Urban2004,Urban2010}, Ayala-Rinc\'on et
al~\cite{oliveira2015completeness}, and Kumar and
Norrish~\cite{Kumar2010} using, respectively, the proof assistants
Isabelle/HOL, PVS and HOL4.  An implementation in Maude using term
graphs~\cite{Calves2007} is also available.  Urban and Cheney used a
nominal unification algorithm to develop a Prolog-like language called
$\alpha$-Prolog~\cite{Cheney2003}.
Our formalisation of nominal C-unification is based on the
  formalisation of equivalence modulo $\{\alpha, AC\}$ presented
  in~\cite{Ayala-Rincon2016}. The representations of permutations and
  terms are similar, but here we deal also with substitutions and
  unification rules, and prove soundness and completeness of the
  unification algorithm.

 Non nominal reasoning modulo equational theories has been
  subject of formalisations. For instance, in \cite{Nipkow89}, Nipkow
  presented a set of Isabelle/HOL tactics for reasoning modulo A, C and AC; Braibant and Pous~\cite{BraibantP11} designed a plugin
  for Coq, with an underlying AC-matching algorithm, that extends the
  system tactic {\tt rewrite} to deal with AC function symbols; also,
  Contejean \cite{Contejean04} formalised in Coq the correction of an
  AC-matching algorithm implemented in 
C\emph{i}ME.  

Syntactic unification with commutative operators 
is an NP-complete problem and its solutions can be
finitely generated~\cite{Kapur1987,Siekmann1979}. 
Since C-unification problems are a particular case of
nominal C-unification problems, our simplification algorithm, checked
in Coq, is also a formalisation of the C-unification algorithm.

\noindent\textbf{Organisation}: 
Section~\ref{sec:background} presents basic concepts and notations. 
Section~\ref{sec:unificationalgorithm} introduces the formalised
equational and freshness inference rules for nominal
C-unification, and also discusses NP-completeness; 
Section~\ref{sec:complete_set_for_trees} shows how single fixpoint
equations are solved giving rise to infinite independent solutions and
also we briefly explain how fixpoint solutions are combined
in~\cite{Ayala2017} in order to produce 
solutions. Section~\ref{sec:conclusions} concludes and describes
future work. 

\vspace{-3mm}
\section{Background}
\label{sec:background}
\vspace{-3mm}

Consider countable disjoint sets of variables ${\cal X} := \{X, Y, Z,
\cdots\}$ and atoms ${\cal A} := \{a, b, c, \cdots\}$. 
A \emph{permutation} $\pi$ is a bijection on $\mathcal{A}$ with a finite
\emph{domain}, where the domain (i.e., the \emph{support})  of $\pi$ is the set 
$dom(\pi) := \{a \in {\cal A}\;|\;\pi\cdot a \neq a\}$. 
The inverse of $\pi$ is denoted by $\pi^{-1}$.
Permutations can be represented by  lists of  {\em
  swappings}, which are  pairs of different atoms $(a\,b)$; hence a
{\em permutation} $\pi$ is a finite list of the form
$(a_1\,b_1)::\ldots ::(a_n\,b_n):: \mbox{\em nil}$, where \emph{nil} 
denotes the identity permutation;  concatenation is denoted
by $\oplus$ and, when no confusion may arise, $::$ and $nil$ are omitted.   
 We follow Gabbay's permutative convention:
Atoms  differ on their names, so for atoms $a$ and $b$ the expression
  $a\neq b$ is redundant. Also, $(a\,b)$ and $(b\,a)$
    represent the same swapping.

We will assume as in \cite{Ayala-Rincon2016}
countable sets of function symbols with different equational
properties such as associativity, commutativity, idempotence,
etc. Function symbols have superscripts that indicate their
equational properties; thus, $f^C_k$ will denote the $k^{th}$ function
symbol that is commutative and $f^\emptyset_j$ the $j^{th}$ function
symbol without any equational property. 
 
\emph{Nominal terms} are generated by the following
  grammar: \\[1mm]
\mbox{}\hspace{3.2cm}$s, t := \langle\rangle \,\,|\,\, \bar{a} \,\,|\,\, [a]t \,\, | \,\,
\langle s, t\rangle \,\,|\,\, f^E_k\,t \,\,|\,\, \pi.X$\\[1mm]
$\langle\rangle$ denotes the \emph{unit} (that is the empty tuple), $\bar{a}$ denotes an
\emph{atom term},
$[a]t$ denotes an \emph{abstraction} of the atom $a$ over the term $t$, $\langle s,
t\rangle$ denotes a \emph{pair}, $f^E_k\,t$ the  \emph{application} of
$f^E_k$ to $t$ and, $\pi.X$ a \emph{moderated variable} or \emph{suspension}. 
 Suspensions of the form ${\tt id} . X$ will be represented just by
 $X$.

The set of variables occurring in a term $t$ will be denoted as $Var(t)$. 
This notation extends to a set $S$ of terms in the natural way:
$Var(S) = \bigcup_{t \in S} Var(t)$.  
As usual, $|\,\_ \,|$ will be used to denote the cardinality
  of sets as well as to denote the size or number of symbols occurring
  in a given term. 


\begin{definition}[Permutation action] 
The action of a permutation on atoms is defined as: 
$\mbox{\it nil}\cdot a := a$; $(b\,c)::\pi \cdot a := \pi\cdot a$;
and, $(b\,c)::\pi \cdot b := \pi\cdot c$.  The action of a permutation
on terms is defined recursively as: \\[1mm]
$\mbox{}\hspace{4mm}\begin{array}{l@{\;\;:=\;\;}l@{\hspace{7mm}}l@{\;\;:=\;\;}l@{\hspace{7mm}}l@{\;\;:=\;\;}l}
     \pi\cdot \langle\rangle & \langle\rangle  &
  \pi\cdot\langle u, v\rangle & \langle \pi\cdot u, \pi\cdot v\rangle & 
    \pi\cdot f_k^E\,t & f_k^E\,(\pi\cdot  t) \\
     \pi\cdot \overline{a} & \overline{\pi \cdot a} &
\pi\cdot ([a]t) & [\pi\cdot a](\pi \cdot t) & 
\pi\cdot (\pi'\,.\,X) & (\pi'\oplus\pi)\,.\,X
   \end{array}$
\end{definition}

Notice that according to the definition of the action of a permutation
over atoms, the composition of permutations $\pi$ and $\pi'$, usually
denoted as $\pi\circ\pi'$, corresponds to the append $\pi'\oplus\pi$. 
Also notice that $\pi'\oplus\pi\cdot t = \pi\cdot(\pi'\cdot t$).
%
The \emph{difference set} between two permutations $\pi$ and $\pi'$ 
is the set of atoms where the action of $\pi$ and $\pi'$ differs:
$ds(\pi,\pi') := \{a \in {\cal A} \;|\; \pi\cdot a
\neq \pi'\cdot a\}$.

A \emph{substitution} $\sigma$ is a mapping from variables to terms
such that its \emph{domain}, $dom(\sigma) :=\{X \mid
X\neq X\sigma\}$, is finite.   For $X\in dom(\sigma)$, $X\sigma$ is called the
\emph{image} of $X$.   Define the \emph{image} of $\sigma$ as $im(\sigma) :=
\{X\sigma\mid X\in dom(\sigma)\}$. 
Let $dom(\sigma) = \{X_1, \cdots, X_n\}$, then $\sigma$ can
be represented as a set of \emph{bindings}  in the form $\{X_1/t_1,
\cdots, X_n/t_n\}$, where $X_i\sigma = t_i$,  for $1 \leq i \leq n$. 


\begin{definition}[Substitution action]\label{def:subst_action}
The \emph{action of a substitution} $\sigma$ on a term $t$, denoted 
$t\sigma$, is defined recursively as follows:\\[1mm]
 $\mbox{}\hspace{8mm}\begin{array}{lcl@{\hspace{1.5cm}}lcl@{\hspace{1.5cm}}lcl}
                           \langle\rangle\sigma & := & \langle\rangle &
                           \overline{a}\sigma & := & \overline{a}  &
                            (f_k^E\,t)\sigma & := & f_k^E\,t\sigma \\
                           \langle s, t\rangle\sigma & := &\langle s\sigma, t\sigma\rangle &
                           ([a]t)\sigma & := & [a]t\sigma &
                           (\pi.X)\sigma & := & \pi \cdot X\sigma
   \end{array}$
\end{definition}

The following result can be proved by induction on the structure of terms. 

\begin{lemma}[Substitutions and Permutations Commute]\label{lem:perm_subst_commutes}
$(\pi \cdot t)\sigma = \pi \cdot (t\sigma)$
\end{lemma}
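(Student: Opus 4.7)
The plan is to proceed by structural induction on the term $t$, following the grammar of nominal terms. There are six cases: unit $\langle\rangle$, atom term $\bar{a}$, abstraction $[a]t'$, pair $\langle s, t'\rangle$, application $f_k^E\,t'$, and suspension $\pi'.X$. For each case I would unfold $(\pi\cdot t)\sigma$ on the left using the permutation action first and then the substitution action, unfold $\pi\cdot(t\sigma)$ on the right in the opposite order, and show the two expressions coincide.

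The unit and atom cases are immediate from the definitions, because $\sigma$ acts as the identity on both $\langle\rangle$ and $\bar{a}$. For the three structural cases (abstraction, pair, application) the reductions factor through the subterms, so a direct appeal to the induction hypothesis closes each case; for instance, in the abstraction case one computes $(\pi\cdot [a]t')\sigma = ([\pi\cdot a](\pi\cdot t'))\sigma = [\pi\cdot a]((\pi\cdot t')\sigma)$, applies the IH to get $[\pi\cdot a](\pi\cdot(t'\sigma))$, and then refolds into $\pi\cdot([a](t'\sigma)) = \pi\cdot(([a]t')\sigma)$.

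The interesting case, and what I expect to be the main obstacle, is the suspension $t = \pi'.X$, because the substitution action replaces $\pi'.X$ with $\pi' \cdot X\sigma$, i.e., it turns a syntactic permutation suffix into a semantic action on the image. Here one computes
\[
(\pi\cdot(\pi'.X))\sigma \;=\; ((\pi'\oplus\pi).X)\sigma \;=\; (\pi'\oplus\pi)\cdot(X\sigma),
\]
and one must match this with $\pi\cdot((\pi'.X)\sigma) = \pi\cdot(\pi'\cdot X\sigma)$. The key ingredient is exactly the identity highlighted in the paragraph preceding the lemma, namely $(\pi'\oplus\pi)\cdot u = \pi\cdot(\pi'\cdot u)$ for any term $u$; taking $u = X\sigma$ closes the case. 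This identity itself is a straightforward induction on $u$ (or can be assumed as already stated in the text), so the overall argument is really a routine structural induction whose only nontrivial step is bookkeeping the order of composition of permutations under $\oplus$.
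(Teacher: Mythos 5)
Your proposal is correct and follows exactly the route the paper indicates: a structural induction on $t$, where all cases are immediate except the suspension case, which is closed by the composition identity $(\pi'\oplus\pi)\cdot u = \pi\cdot(\pi'\cdot u)$ noted in the paper just before the lemma. The paper gives no further detail, so your fleshed-out argument is a faithful elaboration of its proof.
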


 The inference rules defining freshness and $\alpha$-equivalence are given in Fig.
\ref{fig:freshness_relation} and \ref{fig:alpha_equivalence}.  
The symbols $\nabla$ and $\Dta$ are used to
denote \emph{freshness contexts} that are sets of  constraints of
the form $a\#X$, meaning that the atom 
$a$ is fresh in $X$. The domain of a freshness context $dom(\Dta)$
is the set of atoms appearing in it;  $\Dta|_X$ denotes the
restriction of $\Dta$ to the freshness constraints on $X$: $\{a\#X\;|\; a\#X\in \Dta\}$.
The  rules in
Fig. \ref{fig:freshness_relation} are used to check if an atom $a$ is 
fresh in a nominal term $t$ under a freshness context $\nabla$, also
denoted as $\nabla\vdash a\#t$. 
The  rules in Fig. \ref{fig:alpha_equivalence} are used to
check if two nominal terms $s$ and $t$ are $\alpha$-equivalent under some
freshness context $\nabla$, written as $\nabla\vdash
s\approx_\alpha t$. These rules use the inference system
for freshness constraints: specifically freshness constraints are used
in rule $\rulefont{\approx_\alpha
  [ab]}$. 
   
\begin{example}
Let $\sigma = \{X/[a]a\}$. Verify that 
 $\pair{(a \ b ). X}{f(e)}\sigma \approx_{\alpha} \pair{X}{f(e)}\sigma$. 
\end{example}

By $dom(\pi)\#X$ and $ds(\pi,\pi')\#X$ we abbreviate  
the sets $\{a\#X\;|\;a \in dom(\pi)\}$ and $\{a\#X\;|\;a \in
ds(\pi,\pi')\}$, respectively.

\begin{figure}[http]
\vspace{-5mm}
\[
\boxed{
\begin{array}{cccc}
\begin{prooftree}
\justifies \nabla \vdash a\,\#\,\langle\rangle
\using \rulefont{\# \langle\rangle}
\end{prooftree}
&
\begin{prooftree}
\justifies \nabla\vdash a \,\#\, \overline{b}
\using \rulefont{\#\, atom}
\end{prooftree}
&
\begin{prooftree}
\nabla \vdash a \,\#\, t
\justifies \nabla \vdash a \,\#\, f_k^E\,t 
\using \rulefont{\#\, app}
\end{prooftree}
&
\begin{prooftree}
\justifies \nabla\vdash a \,\#\, [a]t
\using \rulefont{\#\,a[a]}
\end{prooftree}
\\[3ex]
\multicolumn{4}{c}
{\begin{array}{ccc}
\begin{prooftree}
\nabla \vdash a \,\#\, t
\justifies \nabla \vdash a \,\#\, [b]t 
\using \rulefont{\#\, a[b]}
\end{prooftree}
&
\begin{prooftree}
(\pi^{-1}\cdot a\#X) \in \nabla
\justifies \nabla\vdash a \,\#\, \pi . X
\using \rulefont{\#\,var}
\end{prooftree}
&
\begin{prooftree}
\nabla\vdash a\,\#\, s \,\,\, \nabla\vdash a\,\#\, t
\justifies \nabla\vdash a \,\#\, \langle s, t\rangle
\using \rulefont{\#\,pair}
\end{prooftree}
\end{array}}
\end{array}
}
\]
\vspace{-5mm}
\caption{Rules for the freshness relation} 
\label{fig:freshness_relation}
\vspace{-2mm}
\end{figure}

\begin{figure}[http]
\[
\boxed{
\begin{array}{ccc}
\begin{prooftree}
\justifies \nabla\vdash\langle\rangle\approx_{\alpha}\langle\rangle
\using \rulefont{\approx_\alpha \langle\rangle}
\end{prooftree}
&
\begin{prooftree}
\justifies \nabla\vdash \overline{a} \approx_\alpha \overline{a}
\using \rulefont{\approx_\alpha atom}
\end{prooftree}
&
\begin{prooftree}
\nabla \vdash s \approx_\alpha t
\justifies \nabla \vdash  f_k^E\,s \approx_\alpha f_k^E\,t 
\using \rulefont{\approx_\alpha app}
\end{prooftree}
\\[3ex]
\multicolumn{3}{c}
{\begin{array}{cc}
\begin{prooftree}
\nabla \vdash s \approx_\alpha t
\justifies \nabla\vdash [a]s \approx_\alpha [a]t
\using \rulefont{\approx_\alpha [aa]}
\end{prooftree}
&
\begin{prooftree}
\nabla \vdash s \approx_\alpha (a\,b)\cdot t \;\; \nabla \vdash a \,\#\, t
\justifies \nabla \vdash [a]s \approx_\alpha [b]t 
\using \rulefont{\approx_\alpha [ab]}
\end{prooftree}
\end{array}}
\\[3ex]
\multicolumn{3}{c}
{\begin{array}{cc}
\begin{prooftree}
ds(\pi,\pi')\# X \subseteq \nabla
\justifies \nabla\vdash \pi.X \approx_\alpha \pi'.X
\using \rulefont{\approx_\alpha var}
\end{prooftree}
&
\begin{prooftree}
\nabla\vdash s_0 \approx_\alpha t_0 \;\; \nabla\vdash s_1 \approx_\alpha t_1
\justifies \nabla\vdash \langle s_0, s_1\rangle \approx_\alpha \langle t_0, t_1\rangle
\using \rulefont{\approx_\alpha pair}
\end{prooftree}
\end{array}}
\end{array}
}
\]
\vspace{-3mm}
\caption{Rules for the relation $\approx_\alpha$} 
\label{fig:alpha_equivalence}
\vspace{-6mm}
\end{figure}

Key properties of the nominal freshness and $\alpha$-equivalence
relations have been extensively explored in previous
works~\cite{Ayala-Rincon2016,oliveira2015completeness,Urban2010,Urban2004}.

\subsection{The relation $\appAC $ as an extension of $\approx_\alpha$}

In \cite{Ayala-Rincon2016},  the relation
$\approx_\alpha$ was extended to deal with associative and commutative 
theories. Here we will consider $\alpha$-equivalence modulo
commutativity, denoted as $\appAC $. This means that some function
symbols in our syntax are commutative, and therefore the rule for
function application  $\rulefont{\approx_\alpha app}$ in
Fig. \ref{fig:alpha_equivalence} 
should be replaced by the  rules in Fig.~\ref{fig:alpha_C_equivalence}.

\begin{figure}[http]
\vspace{-7mm}
\[
\boxed{
\begin{array}{c}
\begin{prooftree}
\nabla \vdash s \appAC  t
\justifies \nabla \vdash  f_k^E\,s \appAC  f_k^E\,t 
\using,\;\;E \neq C\mbox{ or both } s \mbox{ and } t
\mbox{ are not pairs}\;\;\rulefont{\appAC  app}
\end{prooftree}
\\[3ex]
\begin{prooftree}
\nabla \vdash s_0 \appAC  t_i, \;\;\nabla \vdash s_1 \appAC  t_{(i+1)\,mod\, 2} 
\using, \;\;i = 0, 1\;\; \rulefont{\appAC  C} 
\justifies
\nabla \vdash f^C_k\,\langle s_0, s_1\rangle \appAC  f^C_k\,\langle t_0, t_1\rangle
\end{prooftree}
\end{array}
}
\]
\vspace{-3mm}
\caption{Additional rules for $\{\alpha,C\}$-equivalence} 
\label{fig:alpha_C_equivalence}
\vspace{-7mm}
\end{figure}

The following properties for $\appAC$ were formalised as simple
 adaptations of the formalisations given in~\cite{Ayala-Rincon2016}
 for $\approx_\alpha$.

\begin{lemma}[Inversion]\label{lem:eq_reverse}
  The inference rules
   of $\appAC$ are \emph{invertible}.
\end{lemma}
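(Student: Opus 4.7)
The plan is to prove invertibility rule-by-rule by a straightforward case analysis on the last rule used in the derivation. The key observation, which makes the lemma essentially routine, is that the rule set is almost syntax-directed: the outer shapes of the two terms in the conclusion (together with the side conditions) determine uniquely which rule could have been applied. So inversion amounts to matching each possible conclusion with the unique rule (or rules) whose conclusion it fits, and reading off the premises.

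Concretely, I would proceed by structural case analysis on the pair of terms $s, t$ appearing in $\nabla \vdash s \appAC t$. For $s = \langle\rangle$ (resp.\ $s = \bar a$), the only applicable rule is $\rulefont{\appAC \langle\rangle}$ (resp.\ $\rulefont{\appAC atom}$), so there are no premises to extract. For $s = \langle s_0, s_1 \rangle$, inspection shows that only $\rulefont{\appAC pair}$ can conclude $\nabla \vdash \langle s_0, s_1 \rangle \appAC t$, and forces $t$ to be a pair $\langle t_0, t_1 \rangle$ with $\nabla \vdash s_i \appAC t_i$ for $i = 0,1$. For $s = \pi.X$, only $\rulefont{\appAC var}$ applies, yielding $t = \pi'.X$ (same $X$) and $ds(\pi, \pi') \# X \subseteq \nabla$. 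For abstractions, the two sub-cases $[a]s' \appAC [a]t'$ and $[a]s' \appAC [b]t'$ with $a \neq b$ are each covered by exactly one of $\rulefont{\appAC [aa]}$ or $\rulefont{\appAC [ab]}$, again directly yielding the premises.

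The only place where one has to be a little careful is function application, because of the split between $\rulefont{\appAC app}$ and $\rulefont{\appAC C}$. If $s = f_k^E u$ with $E \neq C$, or if $s = f_k^C u$ where $u$ is not a pair, then the side condition of $\rulefont{\appAC C}$ fails and only $\rulefont{\appAC app}$ applies, giving $\nabla \vdash u \appAC v$ with $t = f_k^E v$. If $s = f_k^C \langle s_0, s_1 \rangle$ and $t = f_k^C \langle t_0, t_1 \rangle$, then the side condition of $\rulefont{\appAC app}$ is violated, so only $\rulefont{\appAC C}$ can be the last rule; its inversion yields the existential conclusion that there exists $i \in \{0,1\}$ with $\nabla \vdash s_0 \appAC t_i$ and $\nabla \vdash s_1 \appAC t_{(i+1) \bmod 2}$. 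This existential witness is exactly what the statement of invertibility for $\rulefont{\appAC C}$ must say.

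I do not expect any real mathematical obstacle: the proof is a rule-by-rule inspection, and in Coq each sub-lemma can be discharged by the \texttt{inversion} tactic after noticing that the shapes of $s$ and $t$ rule out all but one candidate last rule. The only nuance worth highlighting is that invertibility of $\rulefont{\appAC C}$ is existential in the index $i$, reflecting the inherent non-determinism of commutative equivalence — a feature that will be crucial later when building the C-unification simplification rules.
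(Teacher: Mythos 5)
Your proposal is correct and follows essentially the same route as the paper: the paper treats this lemma as a routine case analysis on the last rule applied (formalised in Coq as a simple adaptation of the earlier $\approx_\alpha$ development), and it explicitly spells out the disjunctive/existential form of inversion for $\rulefont{\appAC C}$, exactly as you note. Your handling of the side-condition split between $\rulefont{\appAC app}$ and $\rulefont{\appAC C}$ matches the paper's intended reading.
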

This means, for instance, that for rules \rulefont{\approx_\alpha [ab]}  
one has $\nabla \vdash [a]s \appAC  [b]t$ implies  
   $\nabla \vdash s \appAC  (a\,b)\cdot t$ and  
   $\nabla \vdash a \,\#\, t$; and for \rulefont{\appAC  app}, $\nabla
   \vdash f^C_k\,\langle s_0, s_1 \rangle \appAC      f^C_k\,\langle
   t_0, t_1 \rangle$ implies  $\nabla \vdash s_0 \appAC  t_0$   and
   $\nabla \vdash s_1 \appAC  t_1$, or  $\nabla \vdash s_0 \appAC
   t_1$    and $\nabla \vdash s_1 \appAC  t_0$. 

\begin{lemma}[Freshness preservation] 
\label{lem:fresh_preservation} 
  If $\nabla \vdash a\,\#\,s$ and $\nabla \vdash s \appAC  t$ then
  $\nabla \vdash a\,\#\,t$.
\end{lemma}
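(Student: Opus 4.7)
The plan is to proceed by induction on the derivation of $\nabla \vdash s \appAC t$, with a case analysis on the last inference rule used. For every shape of the final rule, I will apply the Inversion Lemma (Lemma 1) to the hypothesis $\nabla \vdash a\,\#\,s$ to decompose it into freshness judgements on the immediate subterms of $s$, then invoke the induction hypothesis on the corresponding $\appAC$-premises, and finally reassemble a freshness derivation of $\nabla \vdash a\,\#\,t$ using the rules of Fig.~\ref{fig:freshness_relation}.

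For all the rules inherited from $\approx_\alpha$ (namely $\rulefont{\approx_\alpha \langle\rangle}$, $\rulefont{\approx_\alpha atom}$, $\rulefont{\approx_\alpha [aa]}$, $\rulefont{\approx_\alpha [ab]}$, $\rulefont{\approx_\alpha var}$, $\rulefont{\approx_\alpha pair}$, and $\rulefont{\appAC app}$ when $E \neq C$ or the arguments are not pairs), the reasoning is a direct adaptation of the freshness preservation proof already formalised for $\approx_\alpha$ in~\cite{Ayala-Rincon2016}; the only subtlety, present in the $\rulefont{\approx_\alpha [ab]}$ case, is the use of equivariance of freshness (i.e.\ $\nabla \vdash a\,\#\,(c\,d)\cdot u$ iff $\nabla \vdash (c\,d)\cdot a \,\#\, u$) when propagating the goal through the swapping $(a\,b)$ that appears in the premise.

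The only genuinely new case is $\rulefont{\appAC C}$: here $s = f^C_k\langle s_0, s_1\rangle$ and $t = f^C_k\langle t_0, t_1\rangle$, with premises $\nabla \vdash s_0 \appAC t_i$ and $\nabla \vdash s_1 \appAC t_{(i+1)\bmod 2}$ for some $i\in\{0,1\}$. From $\nabla \vdash a\,\#\,f^C_k\langle s_0, s_1\rangle$ two inversions (through $\rulefont{\#\,app}$ and $\rulefont{\#\,pair}$) yield $\nabla \vdash a\,\#\,s_0$ and $\nabla \vdash a\,\#\,s_1$. The induction hypothesis applied to each premise gives $\nabla \vdash a\,\#\,t_i$ and $\nabla \vdash a\,\#\,t_{(i+1)\bmod 2}$, and since $i$ and $(i+1)\bmod 2$ exhaust $\{0,1\}$, we obtain $\nabla \vdash a\,\#\,t_0$ and $\nabla \vdash a\,\#\,t_1$; rebuilding with $\rulefont{\#\,pair}$ and $\rulefont{\#\,app}$ concludes.

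I do not expect a real obstacle: the commutative case is symmetric in the two components of the pair, so regardless of which pairing $i$ was chosen by $\rulefont{\appAC C}$, freshness in both $s_0$ and $s_1$ propagates to freshness in both $t_0$ and $t_1$. The only bookkeeping point worth highlighting, and the one that will require care in the Coq script, is the equivariance step in the abstraction case $\rulefont{\approx_\alpha [ab]}$; everything else is routine structural induction backed by Lemma~\ref{lem:eq_reverse}.
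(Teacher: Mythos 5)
Your proposal is correct and takes essentially the same route as the paper, which states that freshness preservation for $\appAC$ was obtained as a simple adaptation of the corresponding formalisation for $\approx_\alpha$ in~\cite{Ayala-Rincon2016} --- i.e.\ exactly your structural induction on the $\appAC$ derivation, with the only new case $\rulefont{\appAC C}$ discharged by observing that freshness in a pair is symmetric in its two components, so the argument permutation chosen by the rule is immaterial. One cosmetic remark: the decomposition of $\nabla \vdash a\,\#\,s$ into freshness of the subterms follows from the syntax-directedness of the rules in Fig.~\ref{fig:freshness_relation}, not from Lemma~\ref{lem:eq_reverse}, which concerns inversion of the $\appAC$ rules.
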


\begin{lemma}[Intermediate transitivity for $\appAC $
  with $\approx_\alpha$]
If $\nabla \vdash s \appAC  t$ and $\nabla \vdash t \approx_\alpha u$ then $\nabla \vdash s \appAC  u$.
\end{lemma}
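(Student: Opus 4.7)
The plan is to proceed by induction on the structure (equivalently, the size) of the middle term $t$, using the Inversion Lemma (Lemma~\ref{lem:eq_reverse}) to decompose $\nabla \vdash s \appAC t$ and the analogous inversion for $\approx_\alpha$ applied to $\nabla \vdash t \approx_\alpha u$. For the atomic constructors $t = \langle\rangle$ and $t = \bar{b}$, both inversions pin down $s$ and $u$ uniquely and the conclusion is immediate. For $t = \pi.X$, inversion yields $s = \pi'.X$ with $ds(\pi',\pi)\#X \subseteq \nabla$ and $u = \pi''.X$ with $ds(\pi,\pi'')\#X \subseteq \nabla$, and the conclusion follows from the set-theoretic inclusion $ds(\pi',\pi'') \subseteq ds(\pi',\pi) \cup ds(\pi,\pi'')$.

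For $t = \langle t_0, t_1\rangle$ and for $t = f_k^E\,t'$ with $E\neq C$ or $t'$ not a pair, the derivations of both premises must use the corresponding pair/application rules, so the inductive hypothesis applied to the direct subterms yields the result. The first genuinely interesting case is $t = f_k^C\,\langle t_0,t_1\rangle$ with $\nabla \vdash s \appAC t$ derived via $\rulefont{\appAC C}$: here inversion on $\nabla \vdash t \approx_\alpha u$ forces $u = f_k^C\,\langle u_0,u_1\rangle$ with $\nabla \vdash t_j \approx_\alpha u_j$, and the choice of pairing index $i \in \{0,1\}$ used in $\rulefont{\appAC C}$ is preserved componentwise by the inductive hypothesis, reconstructing a $\rulefont{\appAC C}$ derivation with the same $i$.

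The main obstacle is the abstraction case $t = [b]t'$: both derivations may independently use either $\rulefont{\approx_\alpha [aa]}$ or $\rulefont{\approx_\alpha [ab]}$, giving four subcases in which intermediate terms of the form $(a\,b)\cdot t'$ appear. To handle these we need (i) equivariance of $\approx_\alpha$ under permutations, namely $\nabla \vdash t \approx_\alpha u$ implies $\nabla \vdash \pi\cdot t \approx_\alpha \pi\cdot u$, which is a standard adaptation from~\cite{Ayala-Rincon2016}; (ii) Lemma~\ref{lem:fresh_preservation} to transport a freshness constraint $\nabla \vdash a\#t'$ across $\nabla \vdash t' \approx_\alpha u'$ (applicable once we observe that $\approx_\alpha$ is a sub-relation of $\appAC$, since the only non-shared rule, $\rulefont{\approx_\alpha app}$ on commutative pairs, is subsumed by $\rulefont{\appAC C}$ with $i=0$); and (iii) the fact that permutation action preserves term size, so the inductive hypothesis can be invoked on $(a\,b)\cdot t'$. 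A case split on whether the outer binders match (e.g.\ $a = c$ or $a\neq c$) then selects which of $\rulefont{\approx_\alpha [aa]}$ or $\rulefont{\approx_\alpha [ab]}$ to use in the reconstructed $\appAC$ derivation of $\nabla \vdash [a]s' \appAC [c]u'$; the involutivity of swappings, $(a\,b)(a\,b) = \mathit{id}$, is what closes the $a=c$ subcase.

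No essentially new ideas beyond equivariance and freshness preservation are required; the heaviest bookkeeping is concentrated in the abstraction case, and the proof closely mirrors the adaptation of the corresponding transitivity argument for $\approx_\alpha$ formalised in~\cite{Ayala-Rincon2016}.
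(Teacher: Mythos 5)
Your proof is correct and follows exactly the route the paper (implicitly) takes: the paper gives no written argument but states that the lemma was formalised as a simple adaptation of the corresponding $\approx_\alpha$ development in~\cite{Ayala-Rincon2016}, and your induction on the middle term using inversion, equivariance, and freshness preservation --- with the bookkeeping concentrated in the abstraction case --- is precisely that adaptation, correctly extended with the $\rulefont{\appAC C}$ case.
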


\begin{lemma}[Equivariance] 
\label{lem:equivariance}
 $\nabla \vdash \pi\cdot s \appAC  \pi\cdot t$ whenever $\nabla \vdash s \appAC  t$.
\end{lemma}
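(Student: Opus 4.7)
The plan is to proceed by induction on the derivation of $\nabla \vdash s \appAC t$, handling each of the seven rules (five inherited from $\approx_\alpha$ plus the two in Fig.~\ref{fig:alpha_C_equivalence}) in turn. Most cases are essentially mechanical once one has pushed $\pi$ inside the term constructors using the recursive clauses of the permutation action. For example, the base cases $\rulefont{\approx_\alpha \langle\rangle}$ and $\rulefont{\approx_\alpha atom}$ are immediate since $\pi \cdot \langle\rangle = \langle\rangle$ and $\pi\cdot \overline{a} = \overline{\pi\cdot a}$; the cases $\rulefont{\approx_\alpha pair}$, $\rulefont{\approx_\alpha [aa]}$, $\rulefont{\appAC app}$ and $\rulefont{\appAC C}$ follow by applying the IH to the sub-derivations and then re-applying the same rule to $\pi\cdot s_i$ and $\pi\cdot t_i$.

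The two interesting cases are $\rulefont{\approx_\alpha [ab]}$ and $\rulefont{\approx_\alpha var}$. For $\rulefont{\approx_\alpha var}$, from $ds(\pi_1,\pi_2)\#X \subseteq \nabla$ I need $ds(\pi_1\oplus\pi,\,\pi_2\oplus\pi)\#X \subseteq \nabla$. Since $(\pi_i \oplus \pi)\cdot a = \pi\cdot(\pi_i\cdot a)$ and $\pi$ is a bijection, one has the identity $ds(\pi_1\oplus\pi,\pi_2\oplus\pi) = ds(\pi_1,\pi_2)$, which suffices. For $\rulefont{\approx_\alpha [ab]}$, starting from $\nabla \vdash s \appAC (a\,b)\cdot t$ and $\nabla \vdash a\,\#\,t$, I want to conclude $\nabla \vdash [\pi\cdot a](\pi\cdot s) \appAC [\pi\cdot b](\pi\cdot t)$. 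The IH applied to the first premise yields $\nabla \vdash \pi\cdot s \appAC \pi\cdot((a\,b)\cdot t)$; the key algebraic identity to bridge to the desired hypothesis of the $[ab]$-rule is
\[
\pi\cdot ((a\,b)\cdot t) \;=\; (\pi\cdot a\ \,\pi\cdot b)\cdot (\pi\cdot t),
\]
which I would verify by a small auxiliary lemma, case-splitting on whether an atom equals $a$, $b$, or neither, and using that $\pi$ is a bijection.

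The remaining ingredient for the $[ab]$-case is the freshness side-condition: I need $\nabla \vdash \pi\cdot a \,\#\, \pi\cdot t$. This will be a separate (auxiliary) equivariance lemma for freshness, proved by a parallel induction on derivations in Fig.~\ref{fig:freshness_relation}; the only non-trivial clause is $\rulefont{\#\,var}$, where $\pi'^{-1}\cdot a\#X \in \nabla$ must be transformed into $(\pi'\oplus\pi)^{-1}\cdot (\pi\cdot a)\#X \in \nabla$, and this reduces to the identity $(\pi'\oplus\pi)^{-1}\cdot (\pi\cdot a) = \pi'^{-1}\cdot a$ obtained from associativity/inverse properties of $\oplus$. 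I would state this equivariance-for-freshness result as a preliminary lemma and then invoke it inside the $[ab]$ step of the main induction.

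The main obstacle, though routine in nominal developments, is keeping the bookkeeping of $\oplus$ and inverses coherent (the convention $\pi\cdot(\pi'\cdot t) = (\pi'\oplus\pi)\cdot t$ inverts the usual order, so care is needed with $\pi^{-1}$ in the $\rulefont{\#\,var}$ clause and with the ordering in the $[ab]$ identity above). Once the two algebraic facts about $\oplus$ (the difference-set invariance and the swap-conjugation identity) and the freshness equivariance lemma are in place, the induction proceeds uniformly over all rules, including the commutative ones in Fig.~\ref{fig:alpha_C_equivalence}, which require no new arguments beyond the IH.
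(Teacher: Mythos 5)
Your proposal is correct and follows essentially the same route as the paper: the paper does not spell out a proof but states that this lemma was formalised as a simple adaptation of the equivariance proof for $\approx_\alpha$ in the cited Coq development, which is precisely the rule-by-rule induction on the derivation you describe, with the auxiliary equivariance lemma for freshness, the conjugation identity $\pi\cdot((a\,b)\cdot t)=(\pi\cdot a\ \,\pi\cdot b)\cdot(\pi\cdot t)$, and the invariance of difference sets under post-composition. (A trivial slip: you announce seven rules but actually, and correctly, treat all eight cases, since $\rulefont{\approx_\alpha app}$ is replaced by the two rules of Fig.~\ref{fig:alpha_C_equivalence}.)
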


\begin{lemma}[Equivalence] 
\label{lem:equivalence}
$\_ \vdash \_ \appAC  \_$ is an equivalence relation.
\end{lemma}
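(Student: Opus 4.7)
The plan is to establish reflexivity, symmetry and transitivity separately, relying on the inversion, freshness preservation and equivariance lemmas already proved for $\appAC$.

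For reflexivity, I would proceed by structural induction on $t$: the base cases $\langle\rangle$ and $\overline{a}$ follow from rules $\rulefont{\approx_\alpha \langle\rangle}$ and $\rulefont{\approx_\alpha atom}$; pairs, abstractions (via $\rulefont{\approx_\alpha [aa]}$) and non-commutative applications follow from the IH and the corresponding rule; for $\pi.X$, use $\rulefont{\approx_\alpha var}$, which applies since $ds(\pi,\pi) = \emptyset$. A commutative application $f^C_k\,\langle s_0, s_1\rangle$ is handled by rule $\rulefont{\appAC C}$ with $i=0$ and two uses of the IH.

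For symmetry, I would perform induction on the derivation of $\nabla \vdash s \appAC t$. Most rules dualize immediately after applying the IH to subderivations. The delicate case is $\rulefont{\approx_\alpha [ab]}$: from premises $\nabla \vdash s \appAC (a\,b)\cdot t$ and $\nabla \vdash a\,\#\,t$, one must produce $\nabla \vdash [b]t \appAC [a]s$, which requires $\nabla \vdash t \appAC (a\,b)\cdot s$ and $\nabla \vdash b\,\#\, s$. For the equivalence part, apply the IH to the subderivation to get $(a\,b)\cdot t \appAC s$ and then equivariance (Lemma~\ref{lem:equivariance}) to conclude. For the freshness part, equivariance of $\#$ upgrades $a\,\#\,t$ to $b\,\#\,(a\,b)\cdot t$; combined with the freshly obtained $(a\,b)\cdot t \appAC s$, Lemma~\ref{lem:fresh_preservation} gives $b\,\#\,s$. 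For the commutative rule $\rulefont{\appAC C}$ with index $i$, the reverse direction is obtained by reapplying the same rule with $j=i$ after swapping the orientation of the two IH instances.

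For transitivity, I would induct on the structure of the middle term $t$ (equivalently, on the derivations) and use the inversion lemma (Lemma~\ref{lem:eq_reverse}) on both $\nabla \vdash s \appAC t$ and $\nabla \vdash t \appAC u$ to extract and recompose subderivations. The main obstacle is the commutative application case: inversion produces a pairing $(s_0 \appAC t_{i},\, s_1 \appAC t_{(i+1)\bmod 2})$ and a pairing $(t_0 \appAC u_{j},\, t_1 \appAC u_{(j+1)\bmod 2})$, giving four combinations to analyse. In each case one chains the appropriate IH invocations so that $s_0$ and $s_1$ become $\appAC$-related to $u_0$ and $u_1$ in one of the two orientations permitted by $\rulefont{\appAC C}$, after which the rule fires. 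The abstraction case $[ab]$ requires combining two such applications: symmetry and equivariance (used to normalise the intermediate permutation actions on $t$) together with Lemma~\ref{lem:fresh_preservation} reduce it to an application of the IH on strictly smaller terms. The hardest bookkeeping is thus in the commutative case, but no new ideas beyond the already stated inversion, freshness-preservation and equivariance lemmas are needed.
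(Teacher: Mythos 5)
Your overall strategy---reflexivity by structural induction, symmetry and transitivity by induction on derivations, with the commutative rule handled by case analysis on the two admissible orientations---is the standard one and is essentially what the paper's Coq formalisation does by adapting the $\approx_\alpha$ development of the cited prior work. There is, however, a concrete gap in both your symmetry and your transitivity arguments at the abstraction case $\rulefont{\approx_\alpha [ab]}$: equivariance alone does not let you ``conclude''. From $\nabla \vdash (a\,b)\cdot t \appAC s$, Lemma~\ref{lem:equivariance} yields $\nabla\vdash (a\,b)\cdot((a\,b)\cdot t)\appAC (a\,b)\cdot s$, and the left-hand term is \emph{not} syntactically $t$: on a suspension $\pi.X$ the double action produces $(\pi\oplus(a\,b)\oplus(a\,b)).X$, which is only $\approx_\alpha$-equivalent to $\pi.X$ (via $\rulefont{\approx_\alpha var}$ with empty difference set), not identical to it. Discharging this residual $\approx_\alpha$ step is exactly the job of the Intermediate Transitivity lemma (composing $\appAC$ with $\approx_\alpha$) that the paper states immediately before the Equivalence lemma and that your proof never invokes; without it, or an auxiliary fact such as $\nabla\vdash \pi^{-1}\cdot(\pi\cdot t)\appAC t$ together with such a bridging principle, the step ``then equivariance to conclude'' does not go through, and the same problem recurs where you propose to ``normalise the intermediate permutation actions'' in the transitivity case.

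A second, smaller issue: in transitivity the induction cannot be on the strict structure of the middle term, nor on subderivations, because after applying equivariance the middle object becomes $(a\,b)\cdot t'$, which is neither a subterm of $[b]t'$ nor the endpoint of a subderivation of the given one. You need a measure invariant under the permutation action, e.g.\ the size $|t|$ of the middle term (permutations preserve term size), together with the observation that equivariance preserves derivability. With these two repairs---invoking Intermediate Transitivity to cancel permutation compositions on suspensions, and switching to a size-based induction---your outline matches the formalised proof.
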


\begin{remark}
According to the grammar for nominal terms, function symbols have no fixed arity: 
any function symbol can apply to any term.  Despite this, in the syntax
of our Coq formalisation commutative symbols apply only to  tuples.
\end{remark}

 
\vspace{-5mm}
\section{A nominal C-unification algorithm}\label{sec:unificationalgorithm}
\vspace{-3mm}

Inference rules are given that transform a nominal C-unification
problem into a finite family of problems that consist 
exclusively of fixpoint equations of the form $\pi.X \approx_? X$, 
together with a substitution and a set of freshness constraints.

\begin{definition}[Unification problem]\label{def:uproblem}
A {\em unification problem} is a pair $\langle \nabla,
P\rangle$, where $\nabla$ is a \emph{freshness context} and $P$ is a
finite set of {\em equations} and \emph{freshness constraints} of
the form $s \approx_? t$ and  $a \#_? s$, respectively, where
$\approx_?$ is symmetric, $s$ and $t$ are terms and $a$ is an
atom.  Nominal terms in the equations preserve the syntactic
  restriction that commutative symbols are only applied to tuples. 
\end{definition}

Equations of the form $\pi.X \approx_? X$ are called  {\em   fixpoint} equations.
    Given $\langle \nabla, P\rangle$, by $P_\approx, P_\#, P_\fpe$ and
    $P_\nfpe$ we will resp. 
    denote the
  sets of equations, freshness constraints, fixpoint equations and non
  fixpoint equations in the set $P$.

\begin{example}\label{ex:0}
  Given the nominal unification problem
  $\mathcal{P}\!=\!\pair{\emptyset}{\{[a][b]X \approx_? [b][a]X\}}$, the
  standard unification algorithm~\cite{Urban2004} reduces it
  to $\pair{\emptyset}{\{X \approx_? (a\,b).X\}}$, which gives the
  solution $\pair{\{a\#X, b\#X\}}{id}$. 
 However, we will see that infinite independent solutions
  are feasible when there is at least a commutative operator.
\end{example}

We design a nominal C-unification  algorithm using one set of transformation rules to deal with 
equations (Fig. \ref{fig:eq_rules_algorithm}) and
another set of rules  to deal with freshness constraints and contexts (Fig.
\ref{fig:fresh_rules_algorithm}). These rules act over triples of the form  
$\langle \nabla, \sigma, P\rangle$, where $\sigma$ is a substitution.
The triple that will be associated by default with a unification
problem $\langle \nabla, P\rangle$ is $\langle \nabla, id,  P\rangle$.
We will use calligraphic uppercase letters (e.g., $\cal P, Q, R,$
etc) to denote triples.

\begin{remark}     Let $\nabla$ and $\nabla'$ be freshness contexts and
  $\sigma$ and $\sigma'$ be substitutions. 
\begin{itemize}
  \item $\nabla' \vdash \nabla\sigma$ denotes that $\nabla' \vdash a\,\#\,X\sigma$ 
   holds for each $(a \# X) \in \nabla$, and
  \item $\nabla \vdash \sigma \approx \sigma'$ that $\nabla \vdash X\sigma \appAC  X\sigma'$
  for all $X$ (in $dom(\sigma) \cup dom(\sigma')$). 
\end{itemize}
\end{remark}
  
\begin{definition}[Solution for a triple or problem]\label{def:sol} 
A {\em solution} for a triple ${\cal P} = \langle \Dta, \delta, P\rangle$
is a pair $\langle \nabla, \sigma \rangle$, where 
the following conditions are satisfied: 

\vspace{1mm}
\begin{minipage}{.38\textwidth}
\begin{enumerate}[leftmargin=1mm]
 \item $\nabla \vdash \Dta\sigma$;
 \item  $\nabla \vdash a \,\#\,t\sigma$ if $a \#_? t \in P$;
\end{enumerate}
\end{minipage}\hspace{-.9cm}
\begin{minipage}{.61\textwidth}
\begin{enumerate}\setcounter{enumi}{2}
 \item  $\nabla \vdash
  s\sigma \appAC  t\sigma$ if $s \approx_? t \in P$;
 \item there is a substitution 
$\lambda$ such that $\nabla\vdash \delta  \lambda
   \!\approx\!  \sigma$.
\end{enumerate}
\end{minipage} 
\vspace{2mm}

A \emph{solution for a unification problem} $\langle \Dta, P\rangle$ is a
solution for the associated triple  $\langle \Dta, id, P\rangle$. 
The \emph{solution set} for a problem or triple ${\cal P}$  is denoted
by     $\mathcal{U}_C({\cal P})$. 
\end{definition}

\begin{definition}[More general solution and complete set of solutions]
  For $\langle \nabla, \sigma\rangle$ and
  $\langle \nabla', \sigma'\rangle$ in ${\cal U}_C({\cal P})$, we say
  that $\langle \nabla, \sigma\rangle$ is \emph{more general} than
  $\langle \nabla', \sigma'\rangle$, denoted
  $\langle \nabla, \sigma\rangle \preccurlyeq \langle \nabla',
  \sigma'\rangle$, if there exists a substitution $\lambda$ satisfying
  $\nabla' \vdash \sigma\lambda \approx \sigma'$ and
  $\nabla' \vdash \nabla\lambda$.  A subset $\cal{V}$ of
  ${\cal U}_C({\cal P})$ is said to be a \emph{complete set of
    solutions} of $\cal P$ if for all
  $\langle \nabla', \sigma'\rangle\in {\cal U}_C({\cal P})$, there
  exists $\langle \nabla, \sigma\rangle$ in $\cal V$ such that
  $\langle \nabla, \sigma\rangle \preccurlyeq \langle \nabla', \sigma'\rangle$.
\end{definition}

We will denote the set of variables occurring in the set $P$ of a
problem $\langle \nabla, P\rangle$ or triple ${\cal P}=\langle \nabla,\sigma,
P\rangle$ as $Var(P)$. We also will write $Var({\cal P})$ to denote
this set. 

\begin{figure}[http]
\vspace{-6mm}
\[
\boxed{
\begin{array}{cc}
\begin{prooftree}
\langle \nabla, \sigma, P \uplus \{s \approx_? s\}\rangle
\justifies \langle \nabla, \sigma, P\rangle
\using \rulefont{\approx_?refl}
\end{prooftree}
&
\begin{prooftree}
\langle \nabla, \sigma, P \uplus \{\langle s_1, t_1 \rangle \approx_? \langle s_2, t_2 \rangle\}\rangle
\justifies \langle \nabla, \sigma, P \cup \{s_1 \approx_? s_2, t_1 \approx_? t_2\}\rangle
\using \rulefont{\approx_? pair}
\end{prooftree}
\\[3ex]
\multicolumn{2}{c}
{\begin{prooftree}
\langle \nabla, \sigma, P \uplus \{f^E_k\,s \approx_? f^E_k\,t\}\rangle
\justifies 
\langle \nabla, \sigma, P \cup \{s \approx_? t\}\rangle
\using ,  \mbox{ if  $E \neq C$} \,\,\rulefont{\approx_? app} 
\end{prooftree}}
\\[3ex]
\multicolumn{2}{c}
{\begin{prooftree}  %
\langle \nabla, \sigma, P \uplus 
        \{f^C_k\,s  \approx_? f^C_k\,t \}\rangle
\using, \left\{\begin{array}{l}\mbox{where } s = \langle s_0, s_1\rangle
                \mbox{ and } t = \langle t_0, t_1\rangle\\ v =\langle
                t_i, t_{(i+1) \,mod\, 2}\rangle,  i = 0, 1 \end{array}\right\} \rulefont{\approx_?C} 
\justifies
\langle \nabla, \sigma, P \cup \{s \approx_? v \}\rangle
\end{prooftree}}
\\[3ex]
\begin{prooftree}
\langle \nabla, \sigma, P \uplus \{[a]s \approx_? [a]t\}\rangle
\justifies \langle \nabla, \sigma, P \cup \{s \approx_? t\}\rangle
\using \rulefont{\approx_?[aa]} 
\end{prooftree}
&
\begin{prooftree}
\langle \nabla, \sigma, P \uplus \{[a]s \approx_? [b]t\}\rangle
\justifies
\langle \nabla, \sigma, P \cup \{s \approx_? (a\,b)\,t, a \#_? t\}\rangle
\using  \rulefont{\approx_?[ab]}
\end{prooftree}
\\[3ex]
\multicolumn{2}{c}
{\begin{prooftree}
\langle \nabla, \sigma, P \uplus \{\pi . X  \approx_? t\}\rangle 
\,\, \mbox{ let } \sigma':= \sigma\{X / \pi^{-1}\cdot t\}
\justifies
\left\langle\begin{array}{l}
\nabla, \sigma', P\{X / \pi^{-1}\cdot t\} \,\,
\cup 
\begin{array}{l}
\bigcup\limits_{\substack{Y \in dom(\sigma'), \\a \# Y \in \nabla}}
\{a \#_? Y\sigma'\} 
\end{array}
\end{array}\right\rangle
\using , \mbox{ if } X \notin Var(t)  \,\,\rulefont{\approx_? inst}
\end{prooftree}}
\\[3ex]
\multicolumn{2}{c}
{\begin{prooftree}
\langle \nabla, \sigma, P \uplus \{\pi.X \approx_? \pi'.X\}\rangle
\justifies \langle \nabla, \sigma, P \cup  \{\pi\oplus(\pi')^{-1}.X \approx_? X\} \rangle
\using, \mbox{ if } \pi' \neq {\tt id} \,\,\rulefont{\approx_?inv}
\end{prooftree}}
\end{array}
}
\]\vspace{-3mm}
\caption{Reduction rules for equational problems} 
\label{fig:eq_rules_algorithm}
\vspace{-6mm}
\end{figure}

The unification algorithm proceeds by simplification.
Derivation with rules of Figs. \ref{fig:eq_rules_algorithm} and
\ref{fig:fresh_rules_algorithm} is respectively
denoted by $\onestepApp$ and $\onestepFresh$. Thus, $\langle
\nabla,\sigma,P\rangle \onestepApp \langle
\nabla,\sigma',P'\rangle$ means  that the second triple is obtained
from the first one by application of one rule. 
We will use the standard rewriting
nomenclature, e.g., we will
say that $\cal P$ is a \emph{normal form} or \emph{irreducible}
by $\onestepApp$, denoted by
\emph{$\onestepApp$-nf}, whenever there is no $\cal Q$ 
such that ${\cal P}\onestepApp{\cal Q}$; 
$\onestepApp^*$
and $\onestepApp^+$  denote respectively derivations in zero
or more and one or more applications of the rules in
Fig. \ref{fig:eq_rules_algorithm}.  

The only rule that can generate branches is
  $\rulefont{\approx_?C}$, which  is an abbreviation for two rules 
  providing  the different forms in which one can relate the arguments  $s$ and $t$ in an equation 
  $f^C_k\,s  \approx_? f^C_k\,t $ for a  commutative
  function symbol ($s$, $t$ are tuples, by the syntactic
  restriction in Definition \ref{def:uproblem}): either $\langle s_0, s_1\rangle \approx_?   \langle t_0,
  t_1\rangle$ or   $\langle s_0, s_1\rangle \approx_?   \langle t_1,
  t_0\rangle$.  
  
  The syntactic restriction on
  arguments of commutative symbols being only tuples,  is not crucial
  since any equation of the form $f_k^C  \pi. X \approx_? t$ can be
  translated into an  equation of  form $f_k^C  \langle \pi.X_1,
  \pi.X_2\rangle \approx_?   t$, where $X_1$ and $X_2$ are new
  variables and  $\nabla$ is   extended to $\nabla'$ in such a way
  that both $X_1$ and $X_2$   inherit all freshness constraints of $X$
  in $\nabla$:  $\nabla'   =\nabla \cup \{a\#X_i \;|\; i=1,2, \mbox{
    and } a\#X\in \nabla\}$.

In the rule $\rulefont{\approx_? inst}$ the inclusion of new
  constraints in the problem, given in   
\begin{minipage}{0,22\textwidth}
\scriptsize{$\displaystyle\bigcup_{\substack{Y \in dom(\sigma'), \\a \# Y \in \nabla}}
  \{a \#_? Y\sigma'\}$}
\end{minipage} 
\begin{minipage}{0,753\textwidth} \ is necessary to
guarantee that the new substitution $\sigma'$ is
{\em compatible} with the freshness context $\nabla$.
\end{minipage}

\begin{figure}[ht]
\vspace{-4mm}
\[
\boxed{
\begin{array}{cc}
\begin{prooftree}
\langle \nabla, \sigma, P \uplus \{a \#_? \langle\rangle\} \rangle
\justifies \langle \nabla, \sigma, P \rangle
\using \rulefont{\#_? \langle\rangle}
\end{prooftree}
&
\begin{prooftree}
\langle \nabla, \sigma, P \uplus \{a \#_? \bar{b}\} \rangle
\justifies \langle \nabla, \sigma, P \rangle
\using \rulefont{\#_? a\bar{b}}
\end{prooftree}
\\[3ex]
\begin{prooftree}
\langle \nabla, \sigma, P \uplus  \{a \#_? f\,t\} \rangle
\justifies \langle \nabla, \sigma, P \cup  \{a \#_? t\} \rangle
\using \rulefont{\#_? app}
\end{prooftree}
&
\begin{prooftree}
\langle \nabla, \sigma, P \uplus  \{a \#_? [a]t\} \rangle
\justifies \langle \nabla, \sigma, P\rangle
\using \rulefont{\#_? a[a]}
\end{prooftree}
\\[3ex]
\begin{prooftree}
\langle \nabla, \sigma, P \uplus  \{a \#_? [b]t\} \rangle
\justifies \langle \nabla, \sigma, P \cup  \{a \#_? t\}\rangle
\using \rulefont{\#_? a[b]}
\end{prooftree}
&
 \begin{prooftree}
 \langle \nabla, \sigma, P \uplus  \{a \#_? \pi . X\} \rangle
 \justifies \langle \{(\pi^{-1}\cdot a) \# X\} \cup \nabla, \sigma, P \rangle
 \using \rulefont{\#_? var}
 \end{prooftree}
 \\[3ex]
 \multicolumn{2}{c}
{
\begin{prooftree}
\langle \nabla, \sigma, P \uplus  \{a \#_? \langle s, t\rangle\} \rangle
\justifies 
\langle \nabla, \sigma, P \cup  \{a \#_? s, a \#_? t\} \rangle
\using \rulefont{\#_? pair}
\end{prooftree}
}
\end{array}
}\]\vspace{-3mm}
\caption{Reduction rules for freshness problems}
\label{fig:fresh_rules_algorithm}
\vspace{-4mm}
\end{figure}


\begin{example}\label{ex:fixpoint}
  Let $*$\footnote{Infix notation is adopted for commutative
    symbols: $s * t$ abbreviates $*\langle s,t\rangle$.} be a commutative function symbol.  Below, we show
  how the 
  problem
  $\mathcal{P}=\pair{\emptyset}{\{[e](a\,b).X * Y \approx_?  [f]
    (a\,c)(c\,d).X * Y\}}$
  reduces (via rules in Figs.~\ref{fig:eq_rules_algorithm} and
  ~\ref{fig:fresh_rules_algorithm}). Application of rule
  \rulefont{\approx_? C} gives two branches which reduce into
  two fixpoint problems: ${\cal Q}_1$ and ${\cal Q}_2$.  Highlighted
  terms show where the rules are applied.
For brevity, let
$\pi_1 = (a\,c)(c\,d)(e\,f)$, $\pi_2 = (a\,b)(e\,f)(c\,d)(a\,c)$, 
$\pi_3= (a\,c)(c\,d)(e\,f)(a\,b)$ and, $\sigma=\{X / (e\,f)(a\,b).Y\}$.
 
{\footnotesize
\noindent$\begin{array}{ll}
\triple{\emptyset}{id}{\{\colorbox{lightgray}{$[e](a\,b).X * Y \approx_? [f] (a\,c)(c\,d).X * Y$}\}}   &    \Rightarrow_{\rulefont{\approx_? [ab]}} \\
           \triple{\emptyset}{id}{\{\colorbox{lightgray}{$(a\,b).X * Y
  \approx_? \pi_1.X * (e\,f).Y$}, \,e \#_?  (a\,c)(c\,d).X * Y\}} &
   \Rightarrow_{\rulefont{\approx_? C}} 
\end{array}$

$\begin{array}{ll} 
\mbox{branch 1:}  & \triple{\emptyset}{id}{\{\colorbox{lightgray}{$(a\,b).X \approx_?  \pi_1.X$}, \,\colorbox{lightgray}{$Y \approx_? (e\,f).Y$},\,  
   \,e \#_?  (a\,c)(c\,d).X * Y \}} \\
    \Rightarrow_{\rulefont{\approx_? inv}}\!\!(2\times)  &
    \triple{\emptyset}{id}{\{(a\,b)[\pi_1]^{-1}.X  \!\approx_?\!  X, \, [(e\,f)]^{-1}.Y  \!\approx_?\!  Y,\, \colorbox{lightgray}{$e \#_?  (a\,c)(c\,d).X * Y$}\}}\\
  \Rightarrow_{\mbox{\scriptsize$\begin{array}{l}
                 \rulefont{\#_? app},\\
                  \rulefont{\#_? pair}
                 \end{array}$}} & \triple{\emptyset}{id}{\{\pi_2.X\approx_?X, (e\,f).Y \approx_? Y,\, 
                                                       \colorbox{lightgray}{$e\#_? (a\,c)(c\,d).X$},\, \colorbox{lightgray}{$e\#_? Y$} \}} \\
  \Rightarrow_{\rulefont{\#_? var}}\!\!(2\times) &
                                                   \triple{\{e\#X,e\#Y\}}{id}{\{\pi_2.X\approx_?X, (e\,f).Y \approx_? Y\}} ={\cal Q}_1
\end{array}$

$\begin{array}{ll} 
   \mbox{branch 2:} & 
   \triple{\emptyset}{id}{\{\colorbox{lightgray}{$(a\,b).X \approx_?  (e\,f).Y $}, \,Y \approx_? \pi_1.X,\,  
   \,e \#_?  (a\,c)(c\,d).X * Y \}} \\
    \Rightarrow_{\rulefont{\approx_? inst}} &
    \triple{\emptyset}{\sigma}{\{\colorbox{lightgray}{$Y \approx_? (a\,c)(c\,d)(e\,f)(e\,f)[(a\,b)]^{-1}.Y$},
                                                                            e \#_? \pi_1[(a\,b)]^{-1}.Y * Y\}}\\
  \Rightarrow_{\rulefont{\approx_? inv}} & \triple{\emptyset}{\sigma}{\{[(a\,c)(c\,d)(a\,b)]^{-1}.Y \approx_? Y,\, 
                                                       \colorbox{lightgray}{$e \#_? \pi_3.Y * Y$}\}} \\
  \Rightarrow_{\mbox{\scriptsize$\begin{array}{l}
                 \rulefont{\#_? app},\\
                  \rulefont{\#_? pair}
                 \end{array}$}} & \triple{\emptyset}{\sigma}{\{(a\,b)(c\,d)(a\,c).Y \approx_? Y,\, 
                                                       \colorbox{lightgray}{$e \#_? \pi_3.Y$},\, \colorbox{lightgray}{$e \#_?Y$}\}} \\
 \Rightarrow_{\rulefont{\#_? var}}\!\!(2\times)& \triple{\{e\#Y,f\#Y\}}{\sigma}{\{(a\,b)(c\,d)(a\,c).Y \approx_? Y\}} ={\cal Q}_2
       \end{array}
$
}
\end{example}

\begin{definition}[Set of $\onestepApp$ and $\onestepFresh$-normal forms]
We denote by ${\cal P}_{\onestepApp}$ (resp. ${\cal P}_{\onestepFresh}$) 
the set of normal forms of ${\cal P}$ with respect to $\onestepApp$ (resp. $\onestepFresh$).  
\end{definition}

\begin{definition}[Fail and success for $\onestepApp$] Let ${\cal
    P}$ be a triple, such that the rules in Fig.~\ref{fig:eq_rules_algorithm} give rise to
    a normal  form $\langle   \nabla, \sigma, P\rangle$. 
The rules in Fig.~\ref{fig:eq_rules_algorithm} are said to
\emph{fail} if  $P$ contains non fixpoint equations.  
Otherwise $\langle   \nabla, \sigma, P\rangle$ is called  a  \emph{successful} triple 
 regarding $\onestepApp$ (i.e., in a successful triple, $P$ consists only of fixpoint
equations and, possibly, freshness constraints).
\end{definition}

The rules in Fig. \ref{fig:fresh_rules_algorithm} will only be applied to
 successful triples regarding  $\onestepApp$.

\begin{definition}[Fail and success for $\onestepFresh$]\label{def:failsuccessRfresh} 
Let  ${\cal Q} = \langle \nabla, \sigma,
Q\rangle$ be a successful triple regarding $\onestepApp$, and ${\cal Q}' = \langle \nabla', \sigma, Q'\rangle$ its normal form via rules in Fig.~\ref{fig:fresh_rules_algorithm}, that is
${\cal Q}\onestepFresh^*{\cal Q}'$ and ${\cal Q}'$ is in
${\cal Q}_{\onestepFresh}$. If $Q'$ contains 
freshness constraints it is said that $\onestepFresh$ \emph{fails}
for $\cal Q$; otherwise, ${\cal Q}'$ will be called a \emph{successful}
triple for $\onestepFresh$.  
\end{definition}

\begin{remark}
Since  in a successful triple  regarding
$\onestepApp$, $\cal Q$, one has only fixpoint equations and
$\onestepFresh$ acts only over freshness constraints,  $Q'$ in the
definition above contains only fixpoint equations and freshness constraints. 
Also,  by a simple case analysis on $t$ one can
check that any triple with freshness constraints $a\#_?t$ is reducible
by $\onestepFresh$,  except when $t\equiv \bar{a}$. Hence
the freshness constraints in $Q'$ would be only of the
form  $a \#_? \bar{a}$. 
\end{remark}

The relation $\onestepApp$,
starts from a triple with the identity substitution and always
maintains a triple $\langle \nabla, \sigma', P' \rangle$ in which the
substitution $\sigma'$ does not affect the current problem 
$P'$. The same happens for  $\onestepFresh$ since the substitution
does not change with this relation. This motivates the
next definition and lemma.  

\begin{definition}[Valid triple]\label{ded:validtriple} ${\cal P} =
  \langle \nabla, \sigma, P \rangle$ {\em is valid} if 
  $im(\sigma)\cap dom(\sigma) = \emptyset$ and  $dom(\sigma)
  \cap Var(P) = \emptyset$.
\end{definition}

\begin{remark}
A substitution $\sigma$ in a valid triple $\cal P$ is \emph{idempotent},
that is, $\sigma\sigma = \sigma$. 
\end{remark}

Lemma \ref{lem:valid_preservation} is proved by case
  analysis on the rules used by $\onestepApp$ and $\onestepFresh$.
\begin{lemma}[Preservation of valid triples]\label{lem:valid_preservation}
If ${\cal P} = \langle \nabla, \sigma, P \rangle$ is valid and
${\cal P} \onestepApp \cup \onestepFresh \mathcal{P'}
=  \langle \nabla', \sigma', P' \rangle$, then ${\cal P}'$ is also
valid.
\end{lemma}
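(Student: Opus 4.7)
The plan is to prove the lemma by case analysis on the rule used in the step ${\cal P}\onestepApp\cup\onestepFresh{\cal P}'$. Almost every rule is trivial: all rules in Fig.~\ref{fig:fresh_rules_algorithm} and all rules in Fig.~\ref{fig:eq_rules_algorithm} except $\rulefont{\approx_? inst}$ leave $\sigma$ unchanged, so $\sigma'=\sigma$ and condition $im(\sigma')\cap dom(\sigma')=\emptyset$ is inherited. For these rules I would check by inspection that $Var(P')\subseteq Var(P)$ (they only discard or decompose equations/constraints into structurally smaller pieces, or, as in $\rulefont{\#_? var}$, migrate a constraint into $\nabla$); hence $dom(\sigma')\cap Var(P')=\emptyset$ follows from the same property for $\cal P$.

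The only substantial case is $\rulefont{\approx_? inst}$, where the rule applied to $\pi.X\approx_? t\in P$ requires $X\notin Var(t)$ (occur-check), sets $\sigma' = \sigma\{X/\pi^{-1}\cdot t\}$, and replaces $P$ by $P\{X/\pi^{-1}\cdot t\}$ together with the family $F:=\{a\#_?Y\sigma'\mid Y\in dom(\sigma'),\,a\#Y\in\nabla\}$. Two preliminary facts drive the argument: (i) $X\notin dom(\sigma)$, because $X\in Var(\pi.X\approx_? t)\subseteq Var(P\uplus\{\pi.X\approx_? t\})$ which is disjoint from $dom(\sigma)$ by validity of $\cal P$; (ii) $X\notin Var(t)$ by the side condition. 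Consequently $dom(\sigma')=dom(\sigma)\cup\{X\}$ and $\sigma$ is idempotent by the remark preceding the lemma.

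For $dom(\sigma')\cap Var(P')=\emptyset$, I first bound $Var(P\{X/\pi^{-1}\cdot t\})\subseteq (Var(P)\setminus\{X\})\cup Var(t)$. Validity of $\cal P$ gives $(Var(P)\setminus\{X\})\cap dom(\sigma)=\emptyset$, and $X\notin Var(t)$ together with $Var(t)\cap dom(\sigma)=\emptyset$ (again by validity of $\cal P$) handle the remaining intersections with $dom(\sigma)\cup\{X\}$. The variables occurring in the constraints of $F$ lie in $\bigcup_{Y\in dom(\sigma')}Var(Y\sigma')$, so their disjointness from $dom(\sigma')$ reduces to verifying $im(\sigma')\cap dom(\sigma')=\emptyset$, which I establish next: for $Y=X$, $Y\sigma'=\pi^{-1}\cdot t$ has $Var(t)$ as variables, disjoint from $dom(\sigma)\cup\{X\}$ by (ii) and validity; for $Y\in dom(\sigma)$, idempotency gives $Var(Y\sigma)\cap dom(\sigma)=\emptyset$, and replacing $X$ by $\pi^{-1}\cdot t$ eliminates any occurrence of $X$ in $Y\sigma$, introducing only variables from $Var(t)$, which are again disjoint from $dom(\sigma)\cup\{X\}$.

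The only delicate point is keeping track of what $\sigma\{X/\pi^{-1}\cdot t\}$ means (application to the image of $\sigma$, then extension by the binding $X\mapsto\pi^{-1}\cdot t$), and in particular ensuring that $X$ truly leaves the image of the new substitution; this is precisely where the occur-check hypothesis $X\notin Var(t)$ is indispensable. Once that bookkeeping is done, both clauses of Definition~\ref{ded:validtriple} for $\cal P'$ follow directly from the two observations just established.
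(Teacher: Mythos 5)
Your proof is correct and follows the same route the paper indicates: a case analysis on the rules of $\onestepApp$ and $\onestepFresh$, with the only non-trivial work concentrated in $\rulefont{\approx_? inst}$, where the occur-check $X\notin Var(t)$ and the validity of $\cal P$ yield both clauses of Definition~\ref{ded:validtriple} for $\cal P'$. The bookkeeping you carry out for $dom(\sigma')=dom(\sigma)\cup\{X\}$ and for the freshly added constraints $a\#_? Y\sigma'$ fills in exactly the details the paper leaves implicit.
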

 
From now on,  we consider only valid triples. 

\begin{lemma}[Termination of $\onestepApp$ and $\onestepFresh$]\label{lem:eq_termination}
There is no infinite chain of reductions $\onestepApp$ (or $\onestepFresh$) starting from an 
arbitrary triple  ${\cal P} = \langle \nabla, \sigma, P\rangle$.
\end{lemma}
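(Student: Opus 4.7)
The plan is to exhibit, for each of the two relations, a well-founded measure on valid triples that strictly decreases under every rule application, so that there can be no infinite chain.

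For $\onestepFresh$ the measure is straightforward: let $\mu_\#({\cal P}) = \sum_{a \#_? t \in P} |t|$, the sum of the sizes of the terms occurring in the freshness constraints of $P$. A case analysis on the seven rules of Fig.~\ref{fig:fresh_rules_algorithm} suffices. Rules $\rulefont{\#_? \langle\rangle}$, $\rulefont{\#_? a\bar{b}}$, $\rulefont{\#_? a[a]}$ and $\rulefont{\#_? var}$ simply delete a constraint (the last one transferring it to $\nabla$), while $\rulefont{\#_? app}$, $\rulefont{\#_? a[b]}$ and $\rulefont{\#_? pair}$ replace a constraint on a term $t$ by constraints on proper subterms of $t$. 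In every case $\mu_\#$ strictly decreases in $\mathbb{N}$, which is well-founded.

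For $\onestepApp$ I would use a lexicographic measure $(\mu_1, \mu_2, \mu_3)({\cal P})$, where $\mu_1 = |Var(P)|$ is the number of distinct variables occurring in $P$, $\mu_2$ is the total size of the equations in $P$ using a term-size function that counts only the constructors $\langle\rangle, \bar{a}, [a]\_, \langle \_, \_\rangle, f^E_k\_, \pi.X$ (so permutation action preserves size, by definition of $\pi\cdot t$), and $\mu_3$ is the number of equations in $P$ of the form $\pi.X \approx_? \pi'.X$ with $\pi'\neq {\tt id}$. The rule $\rulefont{\approx_? inst}$ strictly decreases $\mu_1$: the side condition $X \notin Var(t)$ together with Lemma \ref{lem:valid_preservation} (validity is preserved, hence $dom(\sigma')\cap Var(P')=\emptyset$) guarantees that $X\in Var(P)$ but $X \notin Var(P')$. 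The rules $\rulefont{\approx_? refl}$, $\rulefont{\approx_? pair}$, $\rulefont{\approx_? app}$, $\rulefont{\approx_? C}$, $\rulefont{\approx_?[aa]}$ and $\rulefont{\approx_?[ab]}$ preserve $\mu_1$ and strictly decrease $\mu_2$ by routine term-size calculations. Finally, $\rulefont{\approx_? inv}$ preserves both $\mu_1$ and $\mu_2$ (a suspension has size $1$ regardless of its permutation) but strictly decreases $\mu_3$: the output equation has RHS ${\tt id}.X$, so the side condition $\pi'\neq{\tt id}$ prevents any further application of $\rulefont{\approx_? inv}$ to this equation.

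The main obstacle is the $\rulefont{\approx_? inst}$ case, because besides replacing $P$ with $P\{X/\pi^{-1}\cdot t\}$ it also adjoins freshness constraints $\{a \#_? Y\sigma' : Y\in dom(\sigma'),\ a\# Y\in\nabla\}$, and one must ensure that these do not introduce variables that undo the decrease of $\mu_1$. This is handled by observing that $\sigma'$ is idempotent on valid triples, so $Var(Y\sigma') \subseteq Var(Y\sigma)\cup Var(t)$ with $Var(t)\subseteq Var(P)$, and by tracking image variables back through the derivation to variables already present in the original problem; if necessary the measure can be strengthened to $|Var(P)\cup Var(im(\sigma))|$ without affecting the analysis of the other rules, since only $\rulefont{\approx_? inst}$ modifies $\sigma$. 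Well-foundedness of the lexicographic order on $\mathbb{N}^3$ then yields the statement.
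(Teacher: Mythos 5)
Your proof is correct and takes essentially the same route as the paper: a lexicographic measure $\langle\text{number of variables},\ \text{size},\ \text{number of non-fixpoint equations}\rangle$ for $\onestepApp$, with $\rulefont{\approx_? inst}$ decreasing the first component, the decomposition rules the second, and $\rulefont{\approx_? inv}$ the third, plus the size of the freshness constraints for $\onestepFresh$. The only divergence is cosmetic: the paper sidesteps your worry about the freshness constraints $a\#_? Y\sigma'$ adjoined by $\rulefont{\approx_? inst}$ by taking the first component to be $|Var(P_\approx)|$, the variables of the equational part only, which achieves the same effect as your strengthened component $|Var(P)\cup Var(im(\sigma))|$.
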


\begin{proof}
\begin{itemize}[leftmargin=*]
\item The proof for $\onestepApp$ is by well-founded induction on $\cal P$ using the measure $\|{\cal P}\| =\langle |Var(P_\approx)|, \|P\|, |P_\nfpe|\rangle$ with a lexicographic ordering, where
$\|P\| = \sum_{s \approx_? t \,\in\,P_\approx} |s| + |t|
   + \sum_{a \#_? u \in P_\#} |u|$. 
Note that this measure decreases after each step $\langle \nabla,
\sigma, P\rangle \onestepApp \langle \nabla, \sigma', P'\rangle$: 
for $\rulefont{\approx_?inst}$,  $|Var(P_\approx)|>|Var(P'_\approx)|$;
for $\rulefont{\approx_?refl}$, $\rulefont{\approx_? pair}$, $\rulefont{\approx_? app}$, 
 $\rulefont{\approx_?[aa]}$, $\rulefont{\approx_?[ab]}$ and
 $\rulefont{\approx_?C}$, $|Var(P_\approx)|$ $\geq$
 $|Var(P'_\approx)|$, but $\|P\|>\|P'\|$; 
and, for $\rulefont{\approx_?inv}$,  both $|Var(P_\approx)| =
 |Var(P'\approx)|$ and $\|P\|=\|P'\|$, but $|P_\nfpe|>|P'_\nfpe|$.
\item The proof for $\onestepFresh$ is by induction on $\cal P$ using as measure
$\|P_\#\|$. It can be checked that this measure decreases after
each step: $\langle \nabla,
\sigma, P\rangle \onestepFresh \langle \nabla, \sigma', P'\rangle$.  

\end{itemize}
\end{proof}

To solve a unification problem, $\langle\nabla, P\rangle$, one builds the derivation tree for
$\onestepApp$, labelling the root node with
$\langle\nabla, id, P\rangle$. This tree has leaves labelled with
$\onestepApp$-nf's that are either failing or successful
triples.  Then, the tree is extended by building 
 $\onestepFresh$-derivations starting from all successful leaves.  The
extended tree will include failing leaves and successful leaves. The
successful leaves will be labelled by triples ${\cal P}'$ in which the problem $P'$
consists only of fixpoint equations. 
Since $\onestepApp$ and $\onestepFresh$ are both terminating 
(Lemma \ref{lem:eq_termination}),
the process described above must be also terminating.

\begin{definition}[Derivation tree for $\langle\nabla, P\rangle$]\label{def:dertree}
A \emph{derivation tree} for the unification problem $\langle\nabla,
P\rangle$, denoted as ${\cal T}_{\langle\nabla, P\rangle}$, is a
tree with root label ${\cal P} = \langle\nabla, id,
P\rangle$ built in two stages:  
\begin{itemize}[leftmargin=*]
\item Initially, a tree is built, whose branches end in leaf nodes
  labelled with the triples in ${\cal P}_{\onestepApp}$. The labels in
  each path from the root to a leaf correspond to a 
  $\onestepApp$-derivation.     
\item Further, for each leaf labelled with a successful triple $\cal Q$ in ${\cal
    P}_{\onestepApp}$, the tree is
  extended with a path to a new leaf that is labelled with a
  $\bar{\cal Q} \in {\cal Q}_{\onestepFresh}$. The labels in the extended path
  from the node with label $\cal Q$ to the new leaf correspond to a
   $\onestepFresh$-derivation.    
\end{itemize}
\end{definition}

\begin{remark}
For  $\langle \nabla,
    P\rangle$,  all labels in the nodes of ${\cal T}_{\langle \nabla,
    P\rangle}$ are {\it valid} by Lemma \ref{lem:valid_preservation}. 
\end{remark}

The next lemma is proved by case analysis on elements of ${\cal P}_{\onestepApp}$ and ${\cal P}_{\onestepFresh}$.
\begin{lemma}[Characterisation of leaves of ${\cal T}_{\langle \nabla,
    P\rangle}$]\label{lem:char_leaves} 
Let $\langle \nabla,  P\rangle$ be a unification problem. 
If $\mathcal{P'} = \langle \nabla', \sigma', P'\rangle$
is the label of a leaf in ${\cal T}_{\langle \nabla, P\rangle}$,
then $P'$ can be partitioned as follows:  $P' = P''  \cup P_\bot$, where $P''$ is the set of all
fixpoint equations in $P'$ and $P_\bot = P' - P''$. 
If $P_\bot \neq \emptyset$ then
$\mathcal{U}_C(\mathcal{P'}) = \emptyset$. 
 \end{lemma}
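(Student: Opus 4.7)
The plan is to prove the partition claim trivially from the definition of $P''$ as the set of fixpoint equations, then to establish the solution-emptiness claim by a case analysis on the shapes of elements in $P_\bot$. Since $\mathcal{P'}$ labels a leaf, $P'$ is a normal form for both $\onestepApp$ and $\onestepFresh$; so every element of $P_\bot$ must be an equation or freshness constraint on which no rule of Fig.~\ref{fig:eq_rules_algorithm} or Fig.~\ref{fig:fresh_rules_algorithm} can fire.

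By inspecting all the rules, I would enumerate the surviving shapes. A non-fixpoint equation $s \approx_? t$ in $P_\bot$ must be either (i) a \emph{clash}, in which $s$ and $t$ have incompatible outermost constructors (for example $\langle\rangle \approx_? \bar a$, $[a]s' \approx_? \langle u, v\rangle$, $f^E_k\,s' \approx_? [a]u$, distinct atoms $\bar a \approx_? \bar b$, or two different function symbols), or (ii) an \emph{occurs-check failure} $\pi.X \approx_? t$ with $X \in Var(t)$ and $t$ not of the form $\pi'.X$ (otherwise $\rulefont{\approx_? inst}$, $\rulefont{\approx_?inv}$, or $\rulefont{\approx_?refl}$ would apply). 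For freshness constraints, the remark following Definition~\ref{def:failsuccessRfresh} pins down the only surviving shape to $a \#_? \bar a$.

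I would then assume, for contradiction, that $\langle \nabla'', \sigma\rangle \in \mathcal{U}_C(\mathcal{P'})$ and refute each case. For a clash (i), since the action of $\sigma$ preserves the outermost constructor on both sides (Definition~\ref{def:subst_action}), $\nabla'' \vdash s\sigma \appAC t\sigma$ would contradict invertibility (Lemma~\ref{lem:eq_reverse}): no rule in Figs.~\ref{fig:alpha_equivalence}--\ref{fig:alpha_C_equivalence} has a conclusion whose heads match incompatible constructors. For an occurs-check failure (ii), I would use a size argument: permutations preserve the number of symbols of a term, so $|\pi.X\sigma| = |X\sigma|$; and $X \in Var(t)$ with $t$ not a suspension of $X$ implies that $t\sigma$ properly wraps at least one permuted occurrence of $X\sigma$, hence $|t\sigma| > |X\sigma|$ (by structural induction on $t$, using Lemma~\ref{lem:perm_subst_commutes}). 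Since $\appAC$-related terms have equal size (a routine induction on derivations of $\appAC$), $\nabla'' \vdash \pi.X\sigma \appAC t\sigma$ fails. For $a \#_? \bar a$, no rule of Fig.~\ref{fig:freshness_relation} derives $\nabla'' \vdash a \# \bar a$: the only candidate, $\rulefont{\# atom}$, applies solely to distinct atoms under Gabbay's permutative convention.

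The main obstacle is case (ii): $X$ may appear under nested permutations inside $t$, and one must verify that the size comparison is unaffected by how substitutions and permutations interleave. The key observation is that permutations merely rename atoms without adding symbols, so each permuted occurrence of $X\sigma$ in $t\sigma$ still has size $|X\sigma|$, while any additional structure in $t$ (abstraction, pairing, application, or a second occurrence of $X$) contributes at least one extra symbol, sealing the strict inequality. The remaining subcases amount to direct appeals to Lemma~\ref{lem:eq_reverse} or to the absence of any applicable freshness rule.
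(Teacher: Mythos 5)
Your proof is correct and follows essentially the same route the paper indicates: a case analysis on the possible shapes of elements surviving in a $\onestepApp$/$\onestepFresh$-normal form, refuting solvability for each non-fixpoint residual (constructor clash via invertibility, occurs-check via a size argument, and $a\,\#_?\,\bar a$ via the absence of an applicable freshness rule). The partition claim is indeed immediate from the definition of $P''$, and your enumeration of irreducible shapes is exhaustive given the syntactic restriction on commutative symbols and the validity of triples.
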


The next definition is motivated by the previous characterisation of
the labels of leaves in derivation trees. 

\begin{definition}[Successful leaves]
Let $\langle \nabla,  P\rangle$ be a unification problem.  A leaf in
${\cal T}_{\langle \nabla, P\rangle}$ that is labelled with a triple
of the form $\mathcal{Q} = \langle \nabla', \sigma', Q\rangle$, where
$Q$ consists only of fixpoint equations, is called a \emph{successful
  leaf} of ${\cal T}_{\langle \nabla, P\rangle}$. In this case $\cal
Q$ is called a \emph{successful triple} of ${\cal T}_{\langle \nabla,
  P\rangle}$. The sets of successful leaves and triples  of ${\cal T}_{\langle
  \nabla, P\rangle}$ are denoted respectively by  $SL({\cal T}_{\langle \nabla,
  P\rangle})$ and $ST({\cal T}_{\langle \nabla,
  P\rangle})$.  
\end{definition}

The soundness theorem states that successful leaves of
  ${\cal T}_{\langle \nabla, P\rangle}$ produce \emph{correct} solutions.  The
  proof is by induction on the number of steps of $\onestepApp$ and
  $\onestepFresh$ and uses Lemma \ref{lem:char_leaves} and
  auxiliary results on  the \emph{preservation of solutions} by $\onestepApp$
  and $\onestepFresh$.  Proving preservation of solutions for rules
  $\rulefont{\approx_? [ab]}$ and $\rulefont{\approx_? inst}$ is not
  straightforward and uses Lemmas~\ref{lem:perm_subst_commutes} 
 \ref{lem:eq_reverse}, \ref{lem:fresh_preservation} and \ref{lem:equivariance}
 to check that the four conditions of Def. \ref{def:sol} are valid before, if 
 one supposes their validity after the rule application.

\begin{theorem}[Soundness of $\mathcal{T}_{\langle \nabla, P\rangle}$]\label{lem:T_correctness}
 $\mathcal{T}_{\langle \nabla, P\rangle}$ is correct, i.e., 
 if ${\cal P}' = \langle \nabla', \sigma, P' \rangle$ is the label of a leaf in $\mathcal{T}_{\langle \nabla, P\rangle}$, then
%
\it{1.} $\mathcal{U}_C(\mathcal{P'}) \subseteq \mathcal{U}_C(\langle
  \nabla, id, P\rangle)$, and
\it{2.} if $P'$ contains non fixpoint equations or freshness constraints then $\mathcal{U}_C(\mathcal{P'}) = \emptyset$.
\end{theorem}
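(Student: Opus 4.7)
The plan is to reduce the theorem to a \emph{one-step preservation} lemma and then lift it by induction on derivation length. Specifically, I will prove the auxiliary statement: if $\mathcal{P} \onestepApp \mathcal{P}'$ or $\mathcal{P} \onestepFresh \mathcal{P}'$, then $\mathcal{U}_C(\mathcal{P}') \subseteq \mathcal{U}_C(\mathcal{P})$. With this in hand, Part 1 of the theorem follows by induction on the combined length of the $\onestepApp$-derivation followed by the $\onestepFresh$-derivation that connects the root $\langle \nabla, id, P\rangle$ to the leaf $\mathcal{P}'$. Part 2 is immediate from Lemma~\ref{lem:char_leaves}, since a leaf whose set $P'$ contains any non-fixpoint equation or any freshness constraint must lie in the $P_\bot \neq \emptyset$ case of that lemma, hence $\mathcal{U}_C(\mathcal{P}') = \emptyset \subseteq \mathcal{U}_C(\langle \nabla, id, P\rangle)$.

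The one-step preservation lemma will be established by case analysis on the applied rule. For each rule, I will take $\langle \nabla'', \sigma'' \rangle \in \mathcal{U}_C(\mathcal{P}')$ and verify the four conditions of Definition~\ref{def:sol} for $\mathcal{P}$. Conditions (1), (2) and (4) are usually trivial because the freshness context and the stored substitution either stay the same or change in a controlled way; the work is concentrated in (3), where one must reconstruct the equational/freshness obligations of $\mathcal{P}$ from those of $\mathcal{P}'$. For the syntactic rules $\rulefont{\approx_?refl}$, $\rulefont{\approx_? pair}$, $\rulefont{\approx_? app}$, $\rulefont{\approx_?[aa]}$, $\rulefont{\approx_?inv}$, $\rulefont{\approx_?C}$, and the freshness simplifications in Fig.~\ref{fig:fresh_rules_algorithm}, this is a routine invocation of the inversion Lemma~\ref{lem:eq_reverse} together with the appropriate introduction rule from Figs.~\ref{fig:freshness_relation}, \ref{fig:alpha_equivalence} and \ref{fig:alpha_C_equivalence}. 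The rule $\rulefont{\#_? var}$ adds $(\pi^{-1}\cdot a)\#X$ to $\nabla$, and one reconstructs $\nabla'' \vdash a \#_? \pi.X\sigma''$ directly from the $\rulefont{\#\,var}$ rule together with Lemma~\ref{lem:perm_subst_commutes}.

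The main obstacles are $\rulefont{\approx_? [ab]}$ and $\rulefont{\approx_? inst}$. For $\rulefont{\approx_? [ab]}$, the solution of the child satisfies $\nabla'' \vdash s\sigma'' \appAC ((a\,b)\cdot t)\sigma''$ and $\nabla'' \vdash a \#_? t\sigma''$; by Lemma~\ref{lem:perm_subst_commutes}, $((a\,b)\cdot t)\sigma'' = (a\,b)\cdot(t\sigma'')$, so an application of $\rulefont{\approx_\alpha [ab]}$ yields $\nabla'' \vdash [a]s\sigma'' \appAC [b]t\sigma''$, recovering the parent obligation. For $\rulefont{\approx_? inst}$, the stored substitution changes from $\sigma$ to $\sigma\{X/\pi^{-1}\cdot t\}$, so condition (4) must be renegotiated: if $\lambda$ witnesses $\nabla'' \vdash \sigma'\lambda \approx \sigma''$, one shows that the same $\lambda$ witnesses the parent condition using the facts that $\sigma'' $ satisfies $\nabla'' \vdash \pi.X\sigma'' \appAC t\sigma''$ (whence $X\sigma'' \appAC \pi^{-1}\cdot t\sigma''$ via equivariance, Lemma~\ref{lem:equivariance}) and that the triple is valid (Lemma~\ref{lem:valid_preservation}), so $X \notin Var(t)$ guarantees idempotency. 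The added freshness constraints $\{a \#_? Y\sigma' \mid Y\in dom(\sigma'), a\#Y \in \nabla\}$ exist precisely to re-establish condition (1) for the extended substitution; here Lemma~\ref{lem:fresh_preservation} is used to transfer freshness through the $\appAC$-equivalence induced by $\lambda$.

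Finally, I will package the induction cleanly: for any $\mathcal{P}'$ labelling a leaf of $\mathcal{T}_{\langle \nabla, P\rangle}$, iterate the preservation lemma along the (finite, by Lemma~\ref{lem:eq_termination}) root-to-leaf path to conclude $\mathcal{U}_C(\mathcal{P}') \subseteq \mathcal{U}_C(\langle \nabla, id, P\rangle)$, and observe that the two parts of the theorem together yield the stated soundness, since any leaf with $P_\bot \neq \emptyset$ contributes the empty set to the union and a successful leaf contributes a subset of $\mathcal{U}_C(\langle \nabla, id, P\rangle)$.
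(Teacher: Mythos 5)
Your proposal follows essentially the same route as the paper: a one-step solution-preservation lemma for $\onestepApp$ and $\onestepFresh$, proved by case analysis on the rules (verifying the four conditions of Definition~\ref{def:sol} for the parent triple assuming them for the child, with $\rulefont{\approx_? [ab]}$ and $\rulefont{\approx_? inst}$ as the delicate cases handled via Lemmas~\ref{lem:perm_subst_commutes}, \ref{lem:eq_reverse}, \ref{lem:fresh_preservation} and \ref{lem:equivariance}), lifted by induction on the number of reduction steps, with Part~2 discharged by Lemma~\ref{lem:char_leaves} --- exactly the paper's argument. The only nit is that in the $\rulefont{\approx_? inst}$ case the witness for condition~(4) of the parent should be taken as $\{X/\pi^{-1}\cdot t\}\lambda$ rather than the same $\lambda$ itself.
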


The completeness theorem guarantees that the set of
  successful triples provides a complete set of
  solutions. Its proof uses case analysis on the rules of the relations 
$\onestepApp$ and $\onestepFresh$ by an argumentation similar to the
one used for  Theorem \ref{lem:T_correctness}.
For $\onestepFresh$ one has indeed equivalence:  ${\cal
  P}\onestepFresh {\cal P'}$, implies
$\mathcal{U}_C(\mathcal{P})=\mathcal{U}_C(\mathcal{P'})$. The same is true 
for all rules of the relation $\onestepApp$ except the branching
rule $\rulefont{\approx_?C}$, for which it is necessary to prove that all
solutions of a triple reduced by $\rulefont{\approx_?C}$  must belong
to the set of solutions of one of its sibling triples.

\begin{theorem}[Completeness of $\mathcal{T}_{\langle \nabla,
    P\rangle}$]\label{lem:T_completeness}
Let $\langle \nabla, P\rangle$ and  $\mathcal{T}_{\langle \nabla,
  P\rangle}$ be a unification problem and its derivation tree. Then 
${\cal U}_C(\langle \nabla, id,
P\rangle) = \bigcup_{{\cal Q}\in ST(\mathcal{T}_{\langle
    \nabla, P\rangle})}  {\cal U}_C({\cal Q})$. 
\end{theorem}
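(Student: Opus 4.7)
The plan is to split the set equality into the two inclusions. The inclusion $\supseteq$ is immediate from Soundness (Theorem~\ref{lem:T_correctness}, part 1): each successful triple ${\cal Q} \in ST(\mathcal{T}_{\langle \nabla,P\rangle})$ satisfies $\mathcal{U}_C({\cal Q}) \subseteq \mathcal{U}_C(\langle \nabla, id, P\rangle)$, so the union is contained in $\mathcal{U}_C(\langle \nabla, id, P\rangle)$. The genuine work is the inclusion $\subseteq$, for which I would proceed by induction on the height of $\mathcal{T}_{\langle \nabla, P\rangle}$, propagating solutions from the root down to at least one successful leaf at each branch point.

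The induction step rests on a \emph{preservation of solutions} lemma that I would prove by case analysis on the rules. For every non-branching rule, whether from Fig.~\ref{fig:eq_rules_algorithm} or Fig.~\ref{fig:fresh_rules_algorithm}, I would establish the equality $\mathcal{U}_C({\cal P}) = \mathcal{U}_C({\cal P'})$ whenever ${\cal P} \onestepApp {\cal P'}$ or ${\cal P} \onestepFresh {\cal P'}$. The $\subseteq$ half uses the invertibility of the inference rules for $\appAC$ and $\#$ (Lemma~\ref{lem:eq_reverse}) together with Lemma~\ref{lem:perm_subst_commutes} (to handle $\pi \cdot t$ under substitution) and Lemmas~\ref{lem:fresh_preservation} and \ref{lem:equivariance} (to transport freshness and equivalences through the new constraints). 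The $\supseteq$ half is the content of the solution-preservation arguments already invoked in the soundness proof.

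For the single branching rule $\rulefont{\approx_?C}$, applied to an equation $f^C_k\,\langle s_0,s_1\rangle \approx_? f^C_k\,\langle t_0,t_1\rangle$ and producing siblings ${\cal P'}_0$ (with $\langle s_0,s_1\rangle \approx_? \langle t_0,t_1\rangle$) and ${\cal P'}_1$ (with $\langle s_0,s_1\rangle \approx_? \langle t_1,t_0\rangle$), I would show $\mathcal{U}_C({\cal P}) = \mathcal{U}_C({\cal P'}_0) \cup \mathcal{U}_C({\cal P'}_1)$. The non-trivial direction takes a solution $\langle \nabla', \sigma\rangle$ of ${\cal P}$, applies Lemma~\ref{lem:eq_reverse} to the derivation of $\nabla' \vdash (f^C_k\,\langle s_0,s_1\rangle)\sigma \appAC (f^C_k\,\langle t_0,t_1\rangle)\sigma$ to obtain one of the two matchings between $s_i\sigma$ and $t_j\sigma$, and then concludes membership in the corresponding $\mathcal{U}_C({\cal P'}_i)$. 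Iterating this dichotomy along each branch point of a given derivation path yields, for every $\langle \nabla', \sigma\rangle \in \mathcal{U}_C(\langle \nabla, id, P\rangle)$, a successful leaf ${\cal Q}$ with $\langle \nabla', \sigma\rangle \in \mathcal{U}_C({\cal Q})$. Termination (Lemma~\ref{lem:eq_termination}) guarantees the induction is well-founded.

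The hard part is the rule $\rulefont{\approx_?inst}$, whose treatment must simultaneously verify all four clauses of Definition~\ref{def:sol} for the updated triple $\langle \nabla, \sigma', P\{X/\pi^{-1}\cdot t\} \cup \bigcup\{a\#_? Y\sigma'\}\rangle$. Here I would exploit validity (Lemma~\ref{lem:valid_preservation}), which makes $\sigma'$ behave idempotently, combined with Lemma~\ref{lem:perm_subst_commutes} to rewrite $(\pi.X)\sigma = \pi \cdot X\sigma$; the clause 4 witness $\lambda$ for the new triple is essentially the old one restricted appropriately, while the freshness augmentation $\bigcup\{a\#_? Y\sigma'\}$ is exactly what is needed to preserve clause 1 on $\nabla\sigma'$. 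Rule $\rulefont{\approx_?[ab]}$ requires a similar, but simpler, combination of invertibility of $\rulefont{\approx_\alpha [ab]}$ with equivariance of $\appAC$. Once the preservation lemma is in place for every rule, the global completeness statement follows by straightforward induction along a fixed path from the root to a successful leaf.
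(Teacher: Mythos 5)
Your proposal is correct and follows essentially the same route as the paper: a rule-by-rule preservation argument showing $\mathcal{U}_C$-equality for all non-branching $\onestepApp$ and $\onestepFresh$ steps, the union property for the branching rule $\rulefont{\approx_?C}$ via invertibility (Lemma~\ref{lem:eq_reverse}), and the same supporting lemmas (\ref{lem:perm_subst_commutes}, \ref{lem:fresh_preservation}, \ref{lem:equivariance}, \ref{lem:valid_preservation}, \ref{lem:eq_termination}) for the delicate cases $\rulefont{\approx_?inst}$ and $\rulefont{\approx_?[ab]}$. The only point left implicit is that a solution propagated to a leaf must land on a \emph{successful} leaf, which follows from Lemma~\ref{lem:char_leaves} since failing leaves have empty solution sets.
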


\begin{corollary}[Generality of successful triples]\label{cor:moregenerality}
Let ${\cal P}= \langle \nabla, P\rangle$ be a unification problem and $\langle
\Dta', \sigma'\rangle\in {\cal U}_C({\cal P})$. Then 
there exists a successful triple ${\cal Q} \in ST({\cal T}_{\langle \nabla,
  P\rangle})$ where ${\cal Q} = \langle \Dta, \sigma, Q\rangle$
such that $\langle \Dta', \sigma'\rangle\in {\cal U}_C({\cal Q})$,
and hence, $\Dta'\vdash \Dta\sigma'$ and there exists $\lambda$
such that $\Dta'\vdash \sigma\lambda\approx
\sigma'$.   
\end{corollary}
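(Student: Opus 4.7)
The plan is to derive the corollary as a direct consequence of Theorem \ref{lem:T_completeness}, followed by a straightforward unfolding of Definition \ref{def:sol}. Since the hard work has already been done in proving completeness, the main task here is just bookkeeping: matching the hypothesis $\langle \Delta', \sigma'\rangle\in \mathcal{U}_C(\mathcal{P})$ against the four conditions defining membership in the solution set of a triple.

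First I would observe that, by the definition of the solution set for a unification problem, $\mathcal{U}_C(\langle \nabla, P\rangle) = \mathcal{U}_C(\langle \nabla, id, P\rangle)$, so $\langle \Delta', \sigma'\rangle$ lies in the right-hand side. Applying Theorem \ref{lem:T_completeness} then gives the equality
\[
\mathcal{U}_C(\langle \nabla, id, P\rangle) \;=\; \bigcup_{\mathcal{Q}\in ST(\mathcal{T}_{\langle \nabla,P\rangle})} \mathcal{U}_C(\mathcal{Q}),
\]
so there must exist a successful triple $\mathcal{Q} = \langle \Delta, \sigma, Q\rangle \in ST(\mathcal{T}_{\langle \nabla,P\rangle})$ with $\langle \Delta', \sigma'\rangle \in \mathcal{U}_C(\mathcal{Q})$. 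This already yields the first part of the statement.

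Next I would unfold Definition \ref{def:sol} applied to the triple $\mathcal{Q}$ and the solution $\langle \Delta', \sigma'\rangle$: Condition~1 immediately reads $\Delta' \vdash \Delta\sigma'$, and Condition~4 yields the existence of a substitution $\lambda$ such that $\Delta' \vdash \sigma\lambda \approx \sigma'$. These are precisely the two conclusions claimed in the corollary. Since the argument is a direct instantiation of the completeness theorem, there is no real obstacle; the only subtlety worth making explicit is that Conditions~2 and~3 of Definition \ref{def:sol}, which concern the fixpoint equations and any residual freshness constraints in $Q$, are automatically satisfied by $\langle \Delta', \sigma'\rangle$ since it was chosen in $\mathcal{U}_C(\mathcal{Q})$, but they are not needed for the statement being proved.
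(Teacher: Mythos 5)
Your proposal is correct and follows exactly the paper's own argument: invoke Theorem~\ref{lem:T_completeness} to obtain a successful triple $\mathcal{Q}$ whose solution set contains $\langle \Dta', \sigma'\rangle$, then read off the two claimed conclusions from conditions~1 and~4 of Definition~\ref{def:sol}. No differences worth noting.
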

\begin{proof} By Theorem~\ref{lem:T_completeness},
  ${\cal U}_C({\cal P}) = \bigcup_{{\cal P}'\in ST({\cal T}_{\langle \nabla,
  P\rangle})} {\cal U}_C({\cal P}')$.  Then there exists  ${\cal Q} \in ST({\cal T}_{\langle \nabla,
  P\rangle})$ such that $\langle \Dta', \sigma'\rangle\in {\cal
  U}_C({\cal Q})$.  Suppose ${\cal Q} =   \langle \Dta, \sigma,
Q\rangle$. Then by the first and fourth conditions of the definition
of solution (Def. \ref{def:sol}) we have that $\Dta'\vdash
\Dta\sigma'$ and there exists $\lambda$ such that $\Dta'\vdash \sigma\lambda\approx 
\sigma'$. 
\end{proof}

\begin{remark}
The nominal C-unification problem is to decide, for a given
  $\cal P$, if ${\cal U}_C({\cal P})$ is non empty; that is,
  whether $\cal P$ has nominal C-unifiers. To prove that this problem
  is in NP, a non-deterministic procedure using the reduction rules in
  the same order as in Definition \ref{def:dertree} is designed. In
  this procedure, whenever rule $\rulefont{\approx_?C}$ applies, only
  one of the two possible branches is guessed. In this manner, if the
  derivation tree has a successful leaf, this procedure will guess a
  path to the successful leaf, answering positively to the decision
  problem.  According to the measures used in the proof of termination
  Lemma \ref{lem:eq_termination}, reduction with both the relations
  $\onestepApp$ and $\onestepFresh$ is polynomially bound, which
  implies that this non-deterministic procedure is polynomially bound.
  
  To prove NP-completeness, one can polynomially
  reduce the well-known NP-complete positive 1-in-3-SAT problem into
  nominal C-unification, as done in \cite{BaNi98} for the
  C-unification problem.  
    An instance of
  the positive 1-in-3-SAT problem consists of a set of clauses
  ${\cal C} = \{{\cal C}_i | 1\leq i\leq n\}$, where each ${\cal C}_i$
  is  a disjunction of three propositional variables, say
  ${\cal C}_i = p_i\vee q_i\vee r_i$. A solution of $\cal C$ is a
  valuation whit exactly one variable true in 
  each clause.  The proposed reduction of $\cal C$ into a nominal
  C-unification problem would require just a commutative function
  symbol, say $\oplus$, two atoms, say $a$ and $b$, a variable for each
  clause ${\cal C}_i$, say $Y_i$, and a variable for each
  propositional variable $p$ in $\cal C$, say $X_p$. Instantiating
  $X_p$ as $\barr{a}$ or $\barr{b}$, would be interpreted as
  evaluating $p$ as true or false, respectively.  Each clause
  ${\cal C}_i = p_i\vee q_i\vee r_i$ in $\cal C$ is translated into an
  equation $E_i$ of the form
  $((X_{p_i}\oplus X_{q_i})\oplus X_{r_i})\oplus Y_i \approx_?
  ((\barr{b}\oplus \barr{b}) \oplus \barr{a})\oplus ((\barr{b}\oplus
  \barr{a}) \oplus \barr{b})$.
  The nominal C-unification problem for $\cal C$ is given by
  ${\cal P_C}= \pair{\emptyset}{\{E_i | 1\leq i\leq n\}}$.
  Simplifying ${\cal P_C}$ would not introduce freshness constraints
  since the problem does not include abstractions.  Thus, to conclude
  it is only necessary to check that $\pair{\emptyset}{\sigma}$ is a
  solution for $\cal P_C$ if and only if $\sigma$ instantiates exactly
  one of the variables $X_{p_i}, X_{q_i}$ and $X_{r_i}$ in each
  equation with $\barr{a}$ and the other two with $\barr{b}$, which
  means that $\cal C$ has a solution.  
\end{remark}

\vspace{-3mm}
\section{Generation of solutions for successful leaves of $\mathcal{T}_{\langle \nabla, P\rangle}$}
\label{sec:complete_set_for_trees}
\vspace{-3mm}

To build solutions for a successful leaf
${\cal P} = \langle \nabla, \sigma, P\rangle$ in the derivation tree
of a given unification problem, we will select and combine solutions
generated for fixpoint equations $\pi.X \approx_? X$, for each
$X\in Var(P)$.  We introduce the notion of \emph{pseudo-cycle of a
  permutation}, in order to provide precise conditions to build terms
$t$ by combining the atoms in $dom(\pi)$, such that
$\pi\cdot t \appAC t$.  For convenience, we use the algebraic cycle
representation of permutations. Thus, instead of sequences of
swappings, permutations in nominal terms will be read as products of
disjoint cycles~\cite{Sagan2001}.

\begin{example}\label{ex:k-cycles}
(Continuing Example \ref{ex:fixpoint}) The permutations 
$(a\,b)\!\!::\!\!(e\,f)\!\!::\!\!(c\,d)\!\!::\!\!(a\,c)\!\!::\!nil$ and
$(a\,b)\!\!::\!\!(c\,d)\!\!::\!\!(a\,c)\!\!::\!nil$
are respectively represented as the product of permutation cycles 
$(a\,b\,c\,d)(e\,f)$ and $(a\,b\,c\,d)(e)(f)$.  
\end{example}

Permutation cycles of length one are omitted. In general the cyclic 
representation of a permutation
consists of the product of all its cycles.    

Let $\pi$ be a permutation with $dom(\pi) =n$.  Given
 $a\in dom(\pi)$ the elements of the 
 sequence  $a, \pi(a), \pi^2(a), \ldots$ cannot be all
 distinct. Taking the first $k\leq n$, such that
 $\pi^k(a)=a$, we have the $k$-cycle $(a \  \pi(a) \  \ldots
 \pi^{k-1}(a))$, where 
$\pi^{j+1}(a)$ is the \emph{successor} of
 $\pi^j(a)$. For the 4-cycle in the permutation $(a\, b\, c\, d)\,
 (e\, f)$, the 4-cycles generated by $a, b, c$ and $d$ are the same: 
 $(a\, b\, c\, d) = (b\, c\, d\, a) = (c\, d\, a\, b) = (d\, a\, b\, c)$.


Def.  \ref{def:pseudo_cycle} establishes the notion of a
\emph{pseudo-cycle w.r.t.\ a $k$-cycle $\kappa$}. Intuitively, given a
$k$-cycle $\kappa$ and a  commutative 
function symbol $*$, a pseudo-cycle w.r.t $\kappa$, $ (A_0 
\ldots A_l)$, is a cycle whose elements are either atom terms built
from the atoms in $\kappa$  or terms of the form $A_i' * A_j'$, for $A_i',
A_j'$ elements of  a pseudo-cycle w.r.t $\kappa$.  

\begin{definition}[Pseudo-cycle]\label{def:pseudo_cycle}
Let  $\kappa=(a_0 \ a_1 \ \ldots \ a_{k-1})$ be a $k$-cycle of a
permutation $\pi$.  A {\em
  pseudo-cycle} w.r.t. $\kappa$ is inductively defined as follows: 
\begin{enumerate}[leftmargin=*]
 \item  $\overline{\kappa} = (\overline{a_0}  \cdots
   \overline{a_{k-1}})$ is a {\em pseudo-cycle} w.r.t. $\kappa$, called
   \emph{trivial pseudo-cycle} of $\kappa$. 
 \item $\kappa' = (A_0\;  ...\,A_{k'-1})$ is a {\em pseudo-cycle} w.r.t. $\kappa$, if
 the following conditions are simultaneously satisfied:
 \begin{enumerate}[leftmargin=*]   
  \item each element of $\kappa'$ is of the form $B_i*
    B_j$, where $*$ is a commutative function symbol in the
    signature,  and $B_i, B_j$ are  different elements of
    $\kappa''$, a {\em 
      pseudo-cycle}  w.r.t. $\kappa$. $\kappa'$ will be called a  {\em
      first-instance pseudo-cycle} of $\kappa''$ w.r.t. $\kappa$.  
  \item  $A_i\not\!\approx_{\alpha,C} A_j$  for $i\neq j$, $0\leq i,j\leq k'-1$;
  \item for each $0 \leq i < k'-1$,    ${\kappa} \cdot A_i \appAC
    A_{(i+1) mod \, k'}$.
 \end{enumerate}
\end{enumerate} 
\end{definition}  

The \emph{length} of the pseudo-cycle  $\kappa$, denoted by $|\kappa|$, consists of the number of elements in $\kappa$. A pseudo-cycle of length one will be called \emph{unitary}.

\begin{example}\label{ex:pscycles01}

\begin{enumerate}[leftmargin=4mm,label=\Alph*] 
\item (Continuing Example \ref{ex:0})
 Considering $*$ a commutative function symbol in the signature,
 the unitary pseudo-cycles of $\kappa = (a\,b)$ generate infinite 
 independent solutions for the leaf $\triple{\emptyset}{id}{\{X \approx_?
   (a\,b).X\}}$.  
 Examples of these solutions are:
 $\pair{\emptyset}{\{X / \barr{a} * \barr{b}\}}$,
 $\pair{\emptyset}{\{X / (\barr{a} * \barr{a}) * (\barr{b} * \barr{b})\}}$,
 $\pair{\emptyset}{\{X / (\barr{a} * \barr{b}) * (\barr{a} * \barr{b})\}}$, 
 $\pair{\emptyset}{\{X / ((\barr{a} * \barr{a}) * \barr{a}) * ((\barr{b} * \barr{b}) * \barr{b})\}}$,
 $\pair{\emptyset}{\{X / (\barr{a} * (\barr{a} * \barr{a})) *
   (\barr{b} * (\barr{b} * \barr{b}))\}}$, etc.

\item (Continuing Examples~\ref{ex:fixpoint} and~\ref{ex:k-cycles}) 
In ${\cal Q}_1$ and ${\cal Q}_2$ we have the occurrences of the 4-cycle
$\kappa = (a\,b\,c\,d)$.
Suppose $*,\oplus,+$ are commutative operators in the signature. The following are pseudo-cycles w.r.t. $\kappa$:
$\overline{\kappa}=(\overline{a} \ \overline{b} \ \overline{c} \
  \overline{d} )$; 
$\kappa_1 = ( (\overline{a}* \overline{b}) \  (\overline{b}* \overline{c}) \ (\overline{c}* \overline{d}) \ (\overline{d}*\overline{a}))$;
$\kappa_2=( (\overline{a}\oplus \overline{c}) \  (\overline{b}\oplus
  \overline{d}))$; 
$\kappa_{11} =( ((\overline{a}* \overline{b}) +
  (\overline{b}* \overline{c}) ) \  ((\overline{b}* \overline{c}) +
  (\overline{c}*\overline{d}))   ((\overline{c}* \overline{d}) +
  (\overline{d}* \overline{a}) ) \  ((\overline{d}* \overline{a}) +
  (\overline{a}*\overline{b}))       )$;
$\kappa_{12} =( ((\overline{a}* \overline{b}) * (\overline{c}*
  \overline{d}) ) \  ((\overline{b}* \overline{c}) *
  (\overline{d}*\overline{a})))$;
$\kappa_{21} = ( ((\overline{a}\oplus \overline{c}) * (\overline{b}\oplus
  \overline{d}) ) )$;
$\kappa_{121} =( ((\overline{a}* \overline{b}) * (\overline{c}*
  \overline{d}) ) *  ((\overline{b}* \overline{c}) *
  (\overline{d}*\overline{a})))$.
$\kappa_1$ and $\kappa_2$ are first-instance pseudo-cycles of
$\overline{\kappa}$,  and $\kappa_{11}$ and $\kappa_{12}$ of
$\kappa_1$ and $\kappa_{21}$ of $\kappa_2$.  Notice that,
$|\barr{\kappa}|=|\kappa_1| = |\kappa_{11}|=4$, $|\kappa_{12}|=2$,
and $|\kappa_{21}| = |\kappa_{121} | = 1$.  Also,
$\kappa_{1}$ corresponds to $((\overline{a}*\overline{d})\, (\overline{b}*\overline{a})\,(\overline{c}*\overline{b})\,(\overline{d}*\overline{c}))$, a first-instance pseudo-cycle of
$\overline{\kappa}$. 

Finally, observe that for the elements  of the unitary
pseudo-cycles  $\kappa_{21}$ and $\kappa_{121}$, say $s=(\overline{a}\oplus \overline{c}) * (\overline{b}\oplus
  \overline{d})$ and $t=((\overline{a}* \overline{b}) * (\overline{c}*
  \overline{d}) ) *  ((\overline{b}* \overline{c}) *
  (\overline{d}*\overline{a}))$,
$\{X/s\}$ and $\{X/t\}$ (resp. $\{Y/s\}$ and $\{Y/t\}$) are solutions of the fixpoint equation
$(a\ b \ c \ d)(e \ f).X \approx_? X$ (resp. $(a\ b\ c\ d).Y \approx_?
Y$).  
\end{enumerate} 
\end{example}

Let $\kappa$ be a pseudo-cycle.  Notice that only item 2 of
  Def.~\ref{def:pseudo_cycle} may build a first-instance pseudo-cycle
  $\kappa'$ w.r.t. $\kappa$ with fewer elements.  If
  $|\kappa'| < |\kappa|$ then, due algebraic properties of cycles and
  commutativity of the operator applied ($*$), one must have that
  $|\kappa'| = |\kappa|/2$.  Thus, unitary pseudo-cycles can
  only be generated from cycles of length a power of two.  This is the
  intuition behind the next theorem, proved by induction on
  the size of the cycle $\kappa$.

\begin{theorem}\label{lem:power_2_solutions}
A pseudo-cycle $\kappa$ generates unitary pseudo-cycles  iff $|\kappa|$ is a power of two.
\end{theorem}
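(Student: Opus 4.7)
The plan is to reduce the theorem to the single key lemma that \emph{if $\kappa'$ is a first-instance pseudo-cycle of $\kappa''$ w.r.t.\ $\kappa$, then $|\kappa'|\in\{|\kappa''|,\,|\kappa''|/2\}$}. Granting this lemma, the ``only if'' direction is immediate: any pseudo-cycle w.r.t.\ $\kappa$ arises from the trivial pseudo-cycle $\overline{\kappa}$ (of length $|\kappa|$) by finitely many first-instance constructions, so its length must be $|\kappa|/2^j$ for some $j\geq 0$, and reaching length $1$ forces $|\kappa|=2^j$.

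To prove the key lemma, write $\kappa''=(B_0,\ldots,B_{m-1})$ with $m=|\kappa''|$. By condition 2(c) of Def.~\ref{def:pseudo_cycle} applied to $\kappa''$, one has $\kappa\cdot B_i\appAC B_{(i+1)\bmod m}$, and by 2(b) the $B_i$'s are pairwise non-$\appAC$. Each element of $\kappa'$ then has the form $B_{f(i)}*B_{g(i)}$ for some index functions $f,g$; condition 2(c) applied to $\kappa'$ together with Lemma~\ref{lem:eq_reverse} (inversion) and commutativity of $*$ forces the unordered pairs to shift as $\{f(i+1),g(i+1)\}\equiv\{f(i)+1,g(i)+1\}\pmod{m}$. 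Viewing this as an action of $\mathbb{Z}/m\mathbb{Z}$ on unordered pairs, the stabiliser of $\{f(0),g(0)\}$ is either $\{0\}$ or $\{0,\,m/2\}$, the latter occurring only when $m$ is even and $g(0)-f(0)\equiv m/2 \pmod m$. By condition 2(b) applied to $\kappa'$, distinct indices $i$ produce distinct terms $B_{f(i)}*B_{g(i)}$, which (since the $B_j$'s are pairwise non-$\appAC$) correspond to distinct unordered pairs; hence $|\kappa'|$ equals the orbit size, which is $m$ or $m/2$.

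For the ``if'' direction I will exhibit an explicit descent by iterated \emph{antipodal pairing}. Claim: for every pseudo-cycle $\kappa''=(B_0,\ldots,B_{2^m-1})$ of length $2^m\geq 2$ w.r.t.\ $\kappa$, the sequence
$$\kappa'=\bigl(B_0*B_{2^{m-1}},\;B_1*B_{1+2^{m-1}},\;\ldots,\;B_{2^{m-1}-1}*B_{2^m-1}\bigr)$$
is a first-instance pseudo-cycle of $\kappa''$ of length $2^{m-1}$. Indeed, 2(a) holds because $i\not\equiv i+2^{m-1}\pmod{2^m}$; 2(b) holds because distinct $i\in\{0,\ldots,2^{m-1}-1\}$ yield distinct unordered pairs $\{B_i,B_{i+2^{m-1}}\}$; and 2(c) follows from $\kappa\cdot B_i\appAC B_{(i+1)\bmod 2^m}$, with commutativity of $*$ absorbing the wrap-around at $i=2^{m-1}-1$. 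Starting from $\overline{\kappa}$ of length $2^n$ and iterating this construction $n$ times produces a unitary pseudo-cycle.

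The main obstacle is the key lemma, specifically the identification of $|\kappa'|$ with the orbit size and the determination that the stabiliser has order at most~$2$; both steps hinge on the fact that two sums $B_i*B_j$ and $B_{i'}*B_{j'}$ are $\appAC$-equivalent precisely when $\{i,j\}=\{i',j'\}$, which requires carefully invoking condition 2(b) on $\kappa''$ (to exclude spurious identifications among the $B$'s) before passing it through commutativity of $*$.
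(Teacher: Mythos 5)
Your proof is correct and follows essentially the same route as the paper: the paper's argument also rests on the observation that a first-instance pseudo-cycle has length $|\kappa''|$ or $|\kappa''|/2$, so that iterated construction from the trivial pseudo-cycle $\overline{\kappa}$ can only reach length one when $|\kappa|$ is a power of two, with an induction on $|\kappa|$ for the converse. Your orbit--stabiliser justification of the halving lemma and the explicit antipodal pairing for the ``if'' direction merely supply details the paper leaves implicit.
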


Notice that, according item 2.c of Def.~\ref{def:pseudo_cycle}, if
$\kappa' = (A_0 \ldots A_{k'-1})$ is a pseudo-cycle w.r.t. $\pi$ 
then $\pi \cdot A_{k'-1} \appAC A_0$; particularly, if $k' = 1$ then 
$\pi \cdot A_0 \appAC A_0$. 
Below, given ${\cal P}=\pair{\emptyset}{\{\pi . X \approx_? X\}}$ a fixpoint
equational problem, we call a \emph{combinatory solution} of ${\cal P}$, a
substitution $\{ X/ t\}$, such that $\pi \cdot t \approx_{C} t$, and
$t$ contains only atoms from $\pi$ and commutative function symbols,
built as unary  pseudo-cycles w.r.t. $\kappa$ a cycle in  $\pi$.  

The next theorem is proved by contradiction,  supposing that
  $\kappa$ has an odd factor and using Theorem
  \ref{lem:power_2_solutions}.

\begin{theorem}
Let ${\cal P}=\pair{\emptyset}{\{\pi.X \approx_? X\}}$ be a fixpoint problem. 
${\cal P}$ has a combinatory solution iff there exists a unitary pseudo-cycle $\kappa$ w.r.t.\ $\pi$.
\end{theorem}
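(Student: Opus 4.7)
The plan is to prove both directions of the equivalence, handling the ($\Leftarrow$) direction directly and the ($\Rightarrow$) direction by contradiction, as hinted just before the statement.

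For the ($\Leftarrow$) direction, assume there is a unitary pseudo-cycle $(t)$ w.r.t.\ some cycle $\kappa$ of $\pi$. I would first specialise item 2.c of Definition~\ref{def:pseudo_cycle} to $k'=1$, yielding $\kappa \cdot t \appAC t$ directly. Next, a short induction on the clauses of Def.~\ref{def:pseudo_cycle} establishes that every pseudo-cycle w.r.t.\ $\kappa$ is built using only atoms of $dom(\kappa)$ and commutative function symbols (base case: the trivial pseudo-cycle contains exactly $\overline{a_0}, \dots, \overline{a_{k-1}}$; inductive case: a first-instance pseudo-cycle combines elements of an existing pseudo-cycle via commutative symbols, introducing no new atoms). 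Since $\pi$ decomposes as a product of pairwise disjoint cycles, $\pi$ coincides with $\kappa$ on $dom(\kappa)$, and so $\pi \cdot t = \kappa \cdot t \appAC t$. Hence $\{X/t\}$ is a combinatory solution of $\mathcal{P}$.

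For the ($\Rightarrow$) direction I would argue by contradiction. Assume $\mathcal{P}$ has a combinatory solution $\{X/t\}$. By the definition of combinatory solution, $t$ is built as a unary pseudo-cycle with respect to some cycle $\kappa$ of $\pi$; equivalently, $(t)$ is a unitary pseudo-cycle w.r.t.\ $\kappa$. Suppose, for contradiction, that $|\kappa|$ has an odd factor greater than $1$, i.e.\ $|\kappa|$ is not a power of two. Then Theorem~\ref{lem:power_2_solutions} tells us that $\kappa$ does not generate unitary pseudo-cycles, contradicting the existence of $(t)$. Therefore $|\kappa|$ must be a power of two and, invoking Theorem~\ref{lem:power_2_solutions} in the other direction, $\kappa$ admits a unitary pseudo-cycle w.r.t.\ $\pi$, as required.

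The main technical obstacle lies in the ($\Leftarrow$) direction: the induction ensuring that atoms appearing in a pseudo-cycle of $\kappa$ are confined to $dom(\kappa)$, which is what licenses replacing $\pi$ by $\kappa$ in the action on $t$. Once this lemma is in hand, disjointness of the cyclic decomposition of $\pi$ closes the ($\Leftarrow$) direction, and the ($\Rightarrow$) direction reduces to an essentially direct application of Theorem~\ref{lem:power_2_solutions} after unpacking the definition of combinatory solution.
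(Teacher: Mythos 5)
Your ($\Leftarrow$) direction is sound and in fact supplies a detail the paper glosses over: the induction showing that a pseudo-cycle w.r.t.\ $\kappa$ mentions only atoms of $dom(\kappa)$, so that disjointness of the cyclic decomposition licenses replacing $\kappa$ by $\pi$ in the action on $t$, is exactly the bridge needed between item 2.c of Definition~\ref{def:pseudo_cycle} (stated for the single cycle $\kappa$) and the fixpoint condition $\pi\cdot t\appAC t$ (stated for the whole permutation). (Minor caveat: as literally written, item 2.c quantifies over $0\leq i<k'-1$ and is vacuous for $k'=1$; you are relying on the intended wrap-around reading $\kappa\cdot A_{k'-1}\appAC A_0$, which the paper itself asserts in the remark following the definition.)

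The gap is in the ($\Rightarrow$) direction. You obtain it by reading the clause ``built as unary pseudo-cycles w.r.t.\ $\kappa$ a cycle in $\pi$'' as a defining condition of combinatory solution, which makes that direction true by definition and renders your odd-factor argument circular: you presuppose that $(t)$ is a unitary pseudo-cycle in order to conclude, via Theorem~\ref{lem:power_2_solutions}, that a unitary pseudo-cycle exists. Under that reading the theorem has no content in the forward direction, which cannot be what is intended: the paper's stated strategy (contradiction on $\kappa$ having an odd factor, using Theorem~\ref{lem:power_2_solutions}) only does work if a combinatory solution is merely some term $t$ over atoms of $dom(\pi)$ and commutative symbols satisfying $\pi\cdot t\appAC t$, and the forward direction must then show that the existence of such a $t$ forces some cycle of $\pi$ to admit a unitary pseudo-cycle. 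That argument is missing from your proposal. It requires a structural descent on $t$: at each commutative node $t_1\ast t_2$, inversion of $\rulefont{\appAC C}$ (Lemma~\ref{lem:eq_reverse}) gives either $\pi\cdot t_i\appAC t_i$ for $i=1,2$ or the swapped case, which forces $\pi^2\cdot t_1\appAC t_1$; iterating, the orbit length of subterms can only halve, so the cycle acting on the atoms at the leaves must have length a power of two, and a cycle with an odd factor greater than one yields the contradiction. Without some version of this descent the forward implication is unproved except under the trivialising reading of the definition.
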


\begin{remark}
Since one can generate infinitely many unitary pseudo-cycles from a
given $2^n$-cycle $\kappa$ in $\pi$, $n\in\mathbb{N}$, there exist
infinite independent solutions for the
fixpoint problem $\pair{\emptyset}{\{\pi.X\approx_? X\}}$.  
\end{remark}

\paragraph{General solutions for fixpoint problems.}
To compute the set of solutions for a fixpoint equation, we use a method described in~\cite{Ayala2017}, which is based on the computation of
unitary \emph{extended pseudo-cycles} (\epc).
We refer to~\cite{Ayala2017} for the definition of extended pseudo-cycles and an algorithm to enumerate all the solutions of a successful leaf in the derivation tree.

Pseudo-cycles are  built just from atom terms in $dom(\pi)$ and
commutative function symbols, while \epc's consider
all nominal syntactic elements including new variables, and also non
commutative function symbols.  The soundness and completeness of the generator of solutions described in~\cite{Ayala2017} 
relies on the properties of pseudo-cycles described above, in particular the fact that only unitary pseudo-cycles generate solutions.

\begin{example}
\begin{enumerate}[leftmargin=4mm,label=\Alph*] 
\item  (Continuing Example~\ref{ex:pscycles01} A) 
General solutions for the fixpoint  problem
$\pair{\emptyset}{\{(a\,b).X \approx_? X\}}$ include not only those
generated by pseudo-cycles, but also any other generated by 
unitary \epc's from $(a\,b)$, which include other syntactic elements to
the pseudo-cycle construction; for instance:
 $\pair{\emptyset}{\{X / h\,\barr{a} \!*\! h\,\barr{b}\}}$,
 $\pair{\{a,b\#Y}{\{X / (\pair{\barr{a}}{Y} \!*\! 
 \pair{\barr{a}}{Y}) \!\oplus\! (\pair{\barr{b}}{Y} \!*\! \pair{\barr{b}}{Y})\}}$,
 $\pair{\emptyset}{\{X / ([i]\barr{a} \!*\! [i]\barr{b}) \!\oplus\! ([i]\barr{a} \!*\! [i]\barr{b})\}}$,
 $\pair{\{a,b\#Z\}}{\{X / (\pair{Z}{\barr{a}} * (\barr{a} * \barr{a})) \oplus (\pair{Z}{\barr{b}} * (\barr{b} * \barr{b}))\}}$ and
 $\pair{\emptyset}{\{X / (([b]\barr{a} * [b]\barr{a}) * \barr{b}) \oplus (([a]\barr{b} * [a]\barr{b}) * \barr{a})\}}$.

\item (Continuing Example~\ref{ex:pscycles01} B)
Notice that constraints $e\#X$ and $e\#Y$ in 
${\cal Q}_1 =
\triple{\{e\#X,e\#Y\}}{id}{\{(a\,b\,c\,d)(e\,f).X\!\approx_?\!X, (e\,f).Y
  \!\approx_?\! Y\}}$ avoid any combinatory solution for atoms in 
$(e\,f)$, but there are unitary $\epc$'s w.r.t $(a\,b\,c\,d)$ that give 
solutions; for instance, \epc's 
$s_1 \!=\! (\overline{a}\!\oplus\! \overline{c}) \!*\! (\overline{b}\!\oplus\! \overline{d})$, 
$s_2 \!=\!  (h\,\overline{a}\!\oplus\! h\,\overline{c}) \!*\! (h\,\overline{b}\!\oplus\! h\,\overline{d})$
and 
$s_3 \!=\! ([i]\overline{a}\!\oplus\! [i]\overline{c}) \!*\!
([i]\overline{b}\!\oplus\! [i]\overline{d})$, generate solutions of the
form $\pair{\{e,f\#X,e,f\#Y\}}{\{X/s_i\}}$,  $i\!=\!1,2,3$;  the \epc\   
$v \!=\! (h\,\pair{\overline{a}}{Z}\!\oplus\! h\,\pair{\overline{c}}{Z}) \!*\!
(h\,\pair{\overline{b}}{Z}\!\oplus\! h\,\pair{\overline{d}}{Z})$ 
the solution  $\pair{\{e,f\#X,e,f\#Y, a,b,c,d,e,f\#Z\}}{\{X/v\}}$,
etc. 
\end{enumerate}
\end{example}

\vspace{-5mm}
\section{Conclusions and future work}
\label{sec:conclusions}
\vspace{-3mm}
 
A Coq formalisation of a sound and complete nominal
C-unification algorithm was obtained by combining  $\onestepApp$- and $\onestepFresh$-reduction. The algorithm builds finite derivation
trees such that the leaves provide a complete set of most general
unifiers consisting of freshness contexts, substitutions and fixpoint
equations. We have also introduced the notion of pseudo-cycle to eliminate fixpoint
equations by generating their possibly infinite set of solutions. The
generator is based on the analysis of permutation cycles, followed by a brute-force enumeration procedure. An implementantion of the algorithm, as well as its extension to deal with different equational theories is left as future work. 

\vspace{-3mm}

\end{document}